\newcommand{\cmark}{{\color{green}\ding{51}}}%
\newcommand{\xmark}{{\color{red}\ding{55}}}%
\newtheorem{thm}{\protect\theoremname}
\theoremstyle{plain}
\newtheorem{lem}[thm]{\protect\lemmaname}
\theoremstyle{plain}
\newtheorem{rem}[thm]{\protect\remarkname}
\theoremstyle{plain}
\newtheorem*{lem*}{\protect\lemmaname}
\theoremstyle{plain}
\theoremstyle{plain}
\newtheorem{cor}[thm]{\protect\corollaryname}
\theoremstyle{plain}
\newtheorem{prob}[thm]{\protect\problemname}
\theoremstyle{plain}
  \providecommand{\problemname}{Problem}
  \providecommand{\corollaryname}{Corollary}
  \providecommand{\lemmaname}{Lemma}
  \providecommand{\propositionname}{Proposition}
  \providecommand{\remarkname}{Remark}
\providecommand{\theoremname}{Theorem}
\providecommand{\examplename}{Example}
\newcommand{\CASE}[1]{\STATE \textbf{case} #1\textbf{:} \begin{ALC@g}}
\newcommand{\ENDCASE}{\end{ALC@g}}
\newcommand{\DEFAULT}{\STATE \textbf{default:} \begin{ALC@g}}
\newcommand{\ENDDEFAULT}{\end{ALC@g}}
\newcommand{\DEFAULTLINE}[1]{\STATE \textbf{default:} }
\newcommand{\Or}{\mathcal{O}}
\newcommand{\RR}{\mathbb{R}}
\newcommand{\ZZ}{\mathbb{Z}}
\newcommand{\wt}{\widetilde}
\newcommand{\Tr}{\mathrm{Tr}}
\newcommand{\dd}{\mathrm{d}}
\renewcommand{\Re}{\operatorname{Re}}
\renewcommand{\Im}{\operatorname{Im}}
\title{Heisenberg-limited ground state energy estimation for early fault-tolerant quantum computers}
\author{
Lin Lin\thanks{Department of Mathematics, and Challenge Institute of Quantum Computation,  University of California, Berkeley, and Computational Research Division, Lawrence Berkeley National Laboratory, Berkeley, CA 94720. }
\and
Yu Tong\thanks{Department of Mathematics, University of California, Berkeley,
CA 94720. }
}
\date{\today}
\begin{document}

\maketitle 

\begin{abstract}

Under suitable assumptions, the quantum phase estimation (QPE) algorithm is able to achieve Heisenberg-limited precision scaling in estimating the ground state energy. 
However, QPE requires a large number of ancilla qubits and large circuit depth, as well as the ability to perform inverse quantum Fourier transform, making it expensive to implement on an early fault-tolerant quantum computer. 
We propose an alternative method to estimate the ground state energy of a Hamiltonian with Heisenberg-limited precision scaling, which employs a simple quantum circuit with one ancilla qubit, and a classical post-processing procedure. Besides the ground state energy, our algorithm also produces an approximate cumulative distribution function of the spectral measure, which can be used to compute other spectral properties of the Hamiltonian.

\end{abstract}

\section{Introduction}
\label{sec:intro}


Estimating the ground state energy of a quantum Hamiltonian is of immense importance in condensed matter physics, quantum chemistry, and quantum information. The problem can be described as follows:
we have a Hamiltonian $H$, acting on $n$ qubits, with the eigendecomposition
\[
H = \sum_{k=0}^{K{-}1} \lambda_k \Pi_k,
\]
where $\Pi_k$ is the projection operator into the $\lambda_k$-eigensubspace, and $\lambda_k$'s are increasingly ordered. Each eigenvalue may be degenerate, i.e. the rank of $\Pi_k$ can be more than one. We assume we can access the Hamiltonian $H$ through the time evolution operator $e^{-i\tau H}$ for some fixed $\tau$. Our goal is to estimate the ground state energy $\lambda_0$ to within additive error $\epsilon$.

Some assumptions are needed as otherwise this problem is \QMA-{hard} \cite{KitaevShenVyalyi2002, KempeKitaevRegev2006, OliveiraTerhal2005, AharonovGottesmanEtAl2009}. We assume we are given a state described by its density matrix $\rho$. Let $p_k = \Tr[\rho \Pi_k]$. Then if $p_0$ (i.e. the overlap between the initial state and the ground state) is reasonably large we can solve the ground state energy estimation problem efficiently. This assumption is reasonable in many practical settings. For example, in quantum chemistry, the Hartree-Fock method usually yields an approximate ground state that is easy to prepare on a quantum computer. {At least for relatively small molecular systems, the Hartree-Fock state can often have a large overlap with the exact ground state \cite{TubmanEtAl2018postponing}}. Therefore we may use the Hartree-Fock solution as $\rho$ in this setting. {Other candidates of $\rho$ that can be relatively easily prepared on quantum computers have been discussed in Refs.~\cite{BabbushMcCleanEtAl2015chemical,SugisakiEtAl2018quantum,TubmanEtAl2018postponing}, and an overview of methods to choose $\rho$ can be found in \cite[Section~V.A.2]{McardleEtAl2020quantum}.}

The computational complexity of this task depends on the desired precision $\epsilon$. Even in the ideal case where the exact ground state is given, this dependence cannot be better than linear in $\epsilon^{-1}$ for generic Hamiltonians \cite{AtiaAharonov2017}. This limit is called the Heisenberg limit \cite{GiovannettiLloydMaccone2006,GiovannettiLloydMaccone2011advances,ZwierzPerezDelgadoKok2010,ZwierzPerezDelgadoKok2012ultimate} in quantum metrology. This notion is closely related to the time energy uncertainty principle \cite{AharonovBohm1961time,AharonovMassarPopescu2002measuring,ChildsPreskillRenes2000quantum,AtiaAharonov2017}. This optimal scaling can be achieved using the quantum phase estimation (QPE) algorithm \cite{Kitaev1995}, which we will discuss in detail later.

Much work has been done to develop the algorithms for ground state energy estimation both for near-term quantum devices \cite{peruzzo2014variational,mcclean2016theory,omalley2016scalable,Huggins2020nonorthogonalVQE}, and fully fault-tolerant quantum computers \cite{AbramsLloyd1999quantum, PoulinWocjan2009preparing, ge2019faster, lin2020near}. Relatively little work has been done for {early fault-tolerant quantum computers \cite{campbell2020early,BabbushMcCleanEtAl2021focus,BoothOGorman2021quantum,Layden2021first}} , which we expect to be able to accomplish much more complicated tasks than current and near-term devices, but still place significant limitations on the suitable algorithms. Refs.~\cite{kivlichan2020improved,campbell2020early} carried out careful resource cost estimation of performing QPE for the Hubbard model using surface code to perform quantum error correction. These are to our best knowledge the only works that addressed ground state energy estimation in the context of early fault-tolerant quantum computers.

{To be specific,} we expect such early fault-tolerant quantum computers to have the following characteristics: (1) The number of logical qubits are limited. (2) It is undesirable to have a large number of controlled operations. (3) It is a priority to reduce the circuit depth, e.g. it is better to run a circuit of depth $\Or(D)$ for $\Or(M)$ times than to run a circuit of depth $\Or(DM)$ for a constant number of times, even if using the shorter circuit entails some additional poly-logarithmic factors in the total runtime.


In this context, the textbook version of QPE (see e.g. Refs.~\cite{CleveEkertEtAl1998quantum,nielsen2002quantum}), which uses multiple ancilla qubits to store the phase and relies on inverse quantum Fourier transform (QFT), 
{has features that are not desirable on early fault-tolerant quantum computers.}
Some variants of QPE have been developed to achieve high confidence level \cite{knill2007optimal,poulin2009sampling,nagaj2009fast}, which can be important in many applications. However, such modifications require even more ancilla qubits to store multiple estimates of the phase and an additional coherent circuit to take perform logical operations.
{Another possible way to achieve high confidence level is to utilize a resource state (\cite[Section II B]{BabbushGidneyEtAl2018encoding}) to implement a Kaiser window filter \cite{SandersBerryEtAl2020compilation}. This approach requires the same number of ancilla qubits as the textbook version of QPE.}

Due to the above considerations, we focus on the variants of QPE that use only very few ancilla qubits (in fact, all algorithms below use only one ancilla qubit). Kitaev's algorithm (see e.g. \cite{KitaevShenVyalyi2002}) uses a simple quantum circuit with one control qubit to determine each bit of the phase individually. However this method, together with many other algorithms based on it \cite{wang2019accelerated,wiebe2015bayesian}, are designed for phase estimation with an eigenstate given exactly, which is different from our goal. The semi-classical Fourier transform \cite{griffiths1996semiclassical} can simulate QFT+measurement (meaning all qubits are measured in the end) with only one-qubit gates, classical control and post-processing, thus trading the expensive quantum resource for inexpensive classical operations. One can replace the inverse QFT with the semi-classical Fourier transform, and this results in a phase estimation algorithm that uses only one ancilla qubit \cite{HigginsBerryEtAl2007,BerryHiggins2009}. This approach can be seen as a simulation of the multiple-ancilla qubit version of QPE, and is therefore applicable to the case when $\rho$ is not exactly the ground state. Because of these attractive features this is the version of QPE used in Refs.~\cite{kivlichan2020improved,campbell2020early}. However, as we will explain below in \cref{sec:related}, this type of QPE requires running coherent time evolution for time $\Or(p_0^{-1}\epsilon^{-1})$. This leads to large circuit depth when $p_0$ is small. 
{Moreover, this approach cannot be used together with the resource state discussed earlier because the resource state is not a product state.}

In this work, the complexity is measured by the time for which we need to perform time evolution with the target Hamiltonian $H$. We will use two metrics: (1) the \emph{maximal evolution time}, which is the maximum length of time for which we need to perform (controlled) coherent time evolution, and (2) the \emph{total evolution time}, which is the sum of all the lengths of time we need to perform (controlled) coherent time evolution. They describe respectively the circuit depth and the total runtime. {Moreover, we will be primarily concerned with how they depend on the initial overlap $p_0$ and the precision $\epsilon$. The dependence on the system size $n$ mainly comes indirectly through $p_0$ and the conversion between the total evolution time and runtime, which we will discuss in more detail later.} 
We present an algorithm that achieves the following goals:
\begin{itemize}
\label{list:goals}
    \item[(1)] {Achieves Heisenberg-limited precision scaling, i.e. the total time for which we run time evolution is $\wt{\Or}(\epsilon^{-1}\poly(p_0^{-1}))$;}
    \item[(2)] Uses {at most} one ancilla qubit;
    \item[(3)] The maximal evolution time is at most $\Or(\epsilon^{-1}\polylog(\epsilon^{-1}p_0^{-1}))$.
\end{itemize} 
To our best knowledge our algorithm is the first to satisfy all three requirements. In our algorithm, we sample from a simple quantum circuit, and use the samples to approximately reconstruct the cumulative distribution function (CDF) of the spectral measure associated with the Hamiltonian. We then use classical post-processing to estimate the ground state energy with high confidence. Besides the ground state energy, our algorithm also produces the approximate CDF, which may be of independent interest. In the discussion above we assumed the controlled time evolution can be efficiently done. If controlled time evolution is  costly to implement, then based on ideas in Refs.~\cite{Huggins2020nonorthogonalVQE,RussoEtAl2020evaluating,LuBanulsCirac2020algorithms,OBrienEtAl2020error}, we offer an alternative circuit in Appendix~\ref{sec:control_free} which uses two ancilla qubits, with some additional assumptions.

{The problem of ground state energy estimation is closely related to that of ground state preparation, but there are important differences. First, having access to a good initial state $\rho$ (with large overlap with the ground state) does not make the energy estimation a trivial task, as even if we have access to the exact ground state the quantum resources required to perform phase estimation can still be significant. Second, ground state energy estimation algorithms do not necessarily involve ground state preparation. This is true for the algorithm in this work as well as in Refs.~\cite{ge2019faster,lin2020near}. Consequently, even though the ground state preparation algorithms generally have a runtime that depends on the spectral gap between the two lowest eigenvalues of the Hamiltonian, the cost of ground state energy estimation algorithms may not necessarily depend on the spectral gap.}

{We remark that although we characterize the scaling as depending on the overlap $p_0$, in practice we need to know a lower bound of $p_0$, which we denote by $\eta$. The dependence on $p_0$ should more accurately be replaced by a dependence on $\eta$.}
{To our best knowledge, in order to obtain rigorous guarantee of the performance, the knowledge of $\eta$ (and that $\eta$ is not too small) is needed in all previous algorithms related to QPE. {This is because in QPE we need the knowledge of $\eta$ to obtain a stopping criterion. We will briefly explain this using a simple example.} Suppose we have a Hamiltonian $H$ on $n$ qubits with eigenvalues $\lambda_k$ (arranged in ascending order), and eigenstates $\ket{\psi_k}$, and $\ket{\phi_0}$ is an initial guess for the ground state. Furthermore we assume
$
p_0=|\braket{\phi_0|\psi_0}|^2 = 0.01,\  p_1 = |\braket{\phi_0|\psi_1}|^2 = 0.5.
$
We may idealize QPE as exact energy measurement to simplify discussion. If we have no a priori knowledge of $p_0$, then performing QPE on the state $\ket{\phi_0}$ will give us $\lambda_1$ with probability $1/2$. If we repeat this $\lesssim 100$ times most likely all energies we get will be $\geq \lambda_1$. Only when we measure $\gtrsim 100$ times can we reach the correct ground state energy $\lambda_0$. Hence if we do not know about a lower bound of $p_0$, we can never know whether we have stopped the algorithm prematurely.}

{The main idea of our algorithm is to use a binary search procedure to gradually narrow down the interval in which the ground state energy is located. The key component is a subroutine $\mathrm{CERTIFY}$ (Algorithm~\ref{alg:certify}) that distinguishes whether the ground state energy is approximately to the left or right of some given value. This, however, can only be perform up to certain precision, and can fail with non-zero probability. Therefore our search algorithm needs to account for this fuzzy outcome to produce a final result that is correct with probability arbitrarily close to $1$. In the $\mathrm{CERTIFY}$ procedure, we use a stochastic method to evaluate the cumulative distribution function associated with the spectral density, and this is the key to achieving the Heisenberg scaling. This stochastic method is described in detail in Section~\ref{sec:evaluate_ACDF}.}

\subsection{Related works}\label{sec:related}

We first briefly analyze the cost of the textbook version of QPE using multiple ancilla qubits. Although this method {has features that are not desirable on early fault-tolerant quantum computers,}
this analysis will nevertheless be helpful for understanding the cost of other variants of QPE. For simplicity we assume $\rho=\ket{\phi}\bra{\phi}$ is a pure state, and the ground state $\ket{\psi_0}$ is non-degenerate. Approximately, the QPE  performs a projective measurement in the eigenbasis of $H$. With probability $p_0$, $\ket{\phi}$ will collapse to the ground state $\ket{\psi_0}$. If this happens the energy register will then give the ground state energy $\lambda_0$ to precision $\epsilon$. Therefore we run phase estimation for a total of $\Or(p_0^{-1})$ times, and take the instance with the minimum value in the energy register. With high probability this value will be close to $\lambda_0$. Each single run takes time $\Or(\epsilon^{-1})$. The total runtime cost is therefore $\Or(p_0^{-1}\epsilon^{-1})$. For simplicity here we do not consider the runtime needed to prepare $\ket{\phi}$.

\begin{figure}
    \centering
    \includegraphics[width=0.45\textwidth]{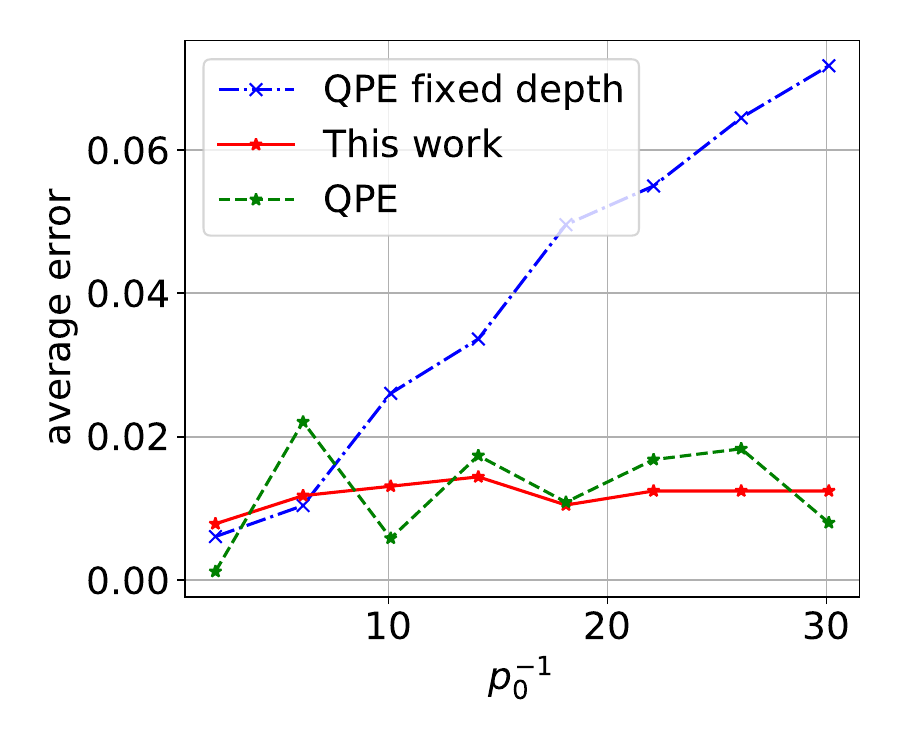}
    \includegraphics[width=0.45\textwidth]{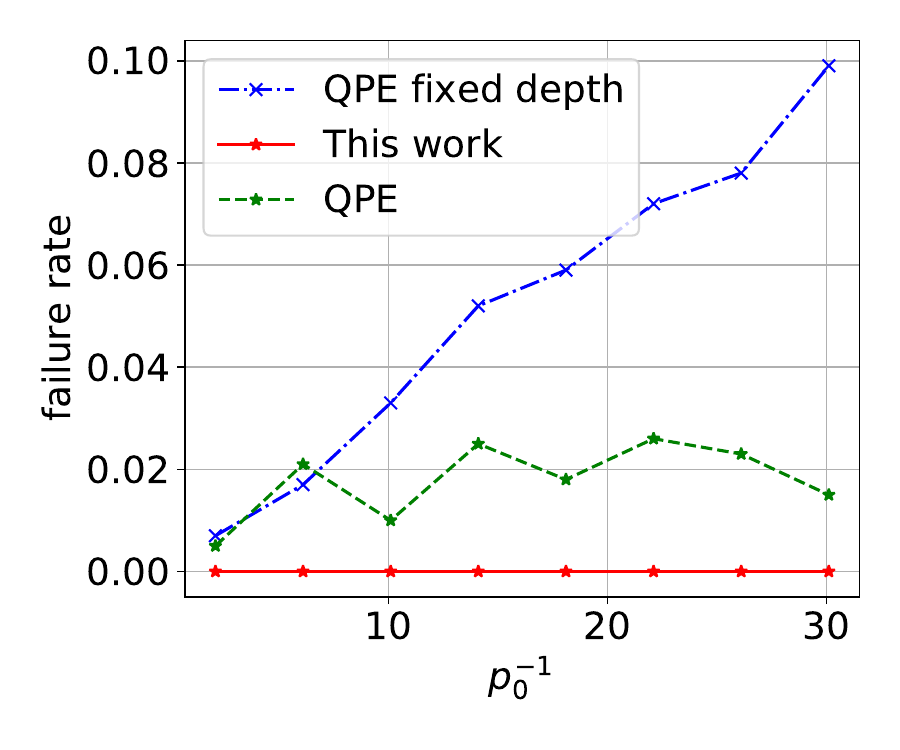}
    \caption{Comparing the performance of the textbook version QPE (blue dashed-dotted line) and the method in this work (red solid line) in ground state energy estimation with a fixed maximal evolution time ($300$ steps of time evolution with $H$) and decreasing initial overlap $p_0$. The results are benchmarked against QPE with maximal evolution time proportional to $p_0^{-1}$ (green dashed line). To use QPE, either with fixed or $\Or(p_0^{-1})$ maximal evolution time, to estimate the ground state energy, we run QPE for $\Or(p_0^{-1})$ times and take the minimum in energy measurement outcomes as the ground state energy estimate. The error is averaged over multiple runs, and the failure rate is the percentage of runs that yield an estimate with error larger than the tolerance $0.04$. The Hamiltonian $H$ is the Hubbard Hamiltonian defined in Eq.~\eqref{eq:hubbard_ham} with $U=10$, and the overlap $p_0$ is artificially tuned.}
    \label{fig:cf_QPE_filter}
\end{figure}

The above analysis, however, is overly optimistic. Since we need to repeat the phase estimation procedure for a total of $\Or(p_0^{-1})$ times, for an event that only has $\Or(p_0)$ probability of happening in a single run, the probability of this event occurring at least once in the total $\Or(p_0^{-1})$ repetitions is now $\Or(1)$ {(which means we cannot ensure that the error happens with sufficient low probability)}. {In our setting}, suppose the maximal evolution time is $T$, then each time we measure the energy register there is a $\Or(T^{-1}\epsilon'^{-1})$ probability that the output will be smaller than $\lambda_0-\epsilon'$. If we choose $T=\Or(\epsilon^{-1})$ as discussed above, and we let $\epsilon'=\epsilon/p_0$, then the probability of the minimum of the $\Or(p_0^{-1})$ energy register measurement outputs being smaller than $\lambda_0-\epsilon/p_0$ is {only upper bounded by} $\Or(1)${, and we can no longer control over the probability of the error being larger than $\epsilon$}. 
This means there {might be} a high probability that the error of the ground state energy in the end will be of order $\epsilon/p_0$ instead of $\epsilon$. For a more formal analysis see \cite[Appendix A]{ge2019faster}.
We numerically demonstrate {that this is indeed the case} in Figure~\ref{fig:cf_QPE_filter}, in which we show the error increases as $p_0$ decreases and there is a larger probability of the estimate deviating beyond a prescribed tolerance if the maximal evolution time, or equivalently the circuit depth, for QPE is fixed.

To avoid this, one can instead choose the maximal evolution time to be  $T=\Or(p_0^{-1}\epsilon^{-1})$. After repeating $\Or(p_0^{-1})$ times, the total runtime then becomes $\Or(p_0^{-2}\epsilon^{-1})$. The increase in maximal evolution time can prevent the increase of error (see Figure~\ref{fig:cf_QPE_filter}). However, the extra $p_0^{-1}$ factor increases the circuit depth and  is undesirable. 

{There are several other algorithms based on phase estimation using a single ancilla qubit \cite{wang2019accelerated,wiebe2015bayesian,OBrienTarasinskiTerhal2019quantum} that are designed for  different settings from ours: they assume the availability of an exact eigenstate, or are designed for obtaining the entire spectrum and thus only work for small systems. Ref.~\cite{SommaOrtizEtAl2002simulating} proposes a method for estimating the eigenvalues by first estimating $\Tr[\rho e^{-it H}]$ and then performing a classical Fourier transform, but no runtime scaling is provided.}
The semi-classical Fourier transform \cite{griffiths1996semiclassical} simulates the QFT in a classical manner, and the QPE using single ancilla qubit and semi-classical Fourier transform has the same scaling in terms of the maximal evolution time and the total evolution time.

In order to improve the dependence on $p_0$,  we may use the high-confidence versions of the phase estimation algorithm \cite{knill2007optimal,poulin2009sampling,nagaj2009fast}. 
In this method, the maximal evolution time required can be reduced to $\Or(\epsilon^{-1}\log(p_0^{-1}))$, through taking the median of several copies of the energy register in a coherent manner. However, this requires
using multiple copies of the energy register,
together with an additional quantum circuit to compute the medians coherently that can be difficult to implement. 
Note that semi-classical Fourier transform can only simulate the measurement outcome and does not preserve coherence, 
and therefore to our knowledge, the high-confidence version of phase estimation cannot be modified to use only a single qubit. 
In Ref.~\cite{ge2019faster}, the authors used a method called minimum label finding to improve the runtime to $\Or(p_0^{-3/2}\epsilon^{-1})$, but the implementation of the minimum label finding with limited quantum resources is again difficult.

Besides these algorithms based on phase estimation, several other algorithms have been developed to solve the ground state energy problem. Ref.~\cite{ge2019faster} proposed a method based on the linear combination of unitaries (LCU) technique that requires running time evolution for duration $\wt{\Or}(p_0^{-1/2}\epsilon^{-3/2})$ and preparing the initial state  $\wt{\Or}(p_0^{-1/2}\epsilon^{-1/2})$ times.\footnote{In this paper we use the following asymptotic notations besides the usual $\Or$ notation: we write $f=\Omega(g)$ if $g=\Or(f)$; $f=\Theta(g)$ if $f=\Or(g)$ and $g=\Or(f)$; $f=\wt{\Or}(g)$ if $f=\Or(g\operatorname{polylog}(g))$.} Assuming the Hamiltonian $H$ is available in its block-encoding \cite{low2019hamiltonian,chakraborty2018power}, Ref.~\cite{lin2020near} uses quantum signal processing \cite{low2017optimal,gilyen2019quantum} with a binary search procedure, which queries the block-encoding $\wt{\Or}(p_0^{-1/2}\epsilon^{-1})$ times and prepares the initial state  $\wt{\Or}(p_0^{-1/2}\log(\epsilon^{-1}))$ times. To our knowledge, this is the best complexity that has been achieved. However the block-encoding of a quantum Hamiltonian of interest, LCU, and amplitude estimation techniques (used in \cite{lin2020near}) are expensive in terms of the number of ancilla qubits, controlled operations, and logical operations needed.

A very different type of algorithms for ground state energy estimation is the variational quantum eigensolver (VQE) \cite{peruzzo2014variational,mcclean2016theory,omalley2016scalable}, which are near-term algorithms and have been demonstrated on real quantum computers. The accuracy of VQE is limited both by the representation power of the variational ansatz, and the capabilities of classical optimization algorithms for the associated non-convex optimization problem. Hence unlike aforementioned algorithms, there is no provable performance guarantees for VQE-type methods. In fact some recent results show solving the non-convex optimization problem can be \NP-hard \cite{bittel2021VQE_NPhard}. Furthermore, each evaluation of the energy expectation value to precision $\epsilon$ requires $\Or(\epsilon^{-2})$ samples due to Monte Carlo sampling. This can to some extent be remedied using the methods in \cite{knill2007optimal,wang2019accelerated} at the expense of larger circuit depth requirement. 

There are also a few options that can be viewed to be in-between VQE and QPE. The quantum imaginary time evolution (QITE) algorithm \cite{Motta2019QITE} uses state tomography turning an imaginary time evolution into a series of real time Hamiltonian evolution problem. Inspired by the classical Krylov subspace method, Refs.~\cite{stair2020multireference,parrish2019quantum,Huggins2020nonorthogonalVQE} propose to solve the ground state energy problem by restricting the Hilbert space to a low dimension space spanned by some eigenstates that are accessible with time evolution. Similar to VQE, no provable complexity upper bound is known for these algorithms, and all algorithms suffer from the $\epsilon^{-2}$ scaling due to the Monte Carlo sampling. In fact, the stability of these algorithms remains unclear in the presence of sampling errors.

A more ambitious goal than ground state energy estimation is to estimate the distribution of all eigenvalues weighted by a given initial state $\rho$ {\cite{EmersonLloydEtAl2004estimation,OBrienTarasinskiTerhal2019quantum,somma2019quantum}}. Using a quantum circuit similar to that in Kitaev's algorithm as well as classical post-processing, Ref.~\cite{somma2019quantum} proposed an algorithm to solve the quantum eigenvalue estimation problem (QEEP). We henceforth refer to this algorithm as the quantum eigenvalue estimation algorithm (QEEA). Suppose $\|H\|\leq 1/2$, and the interval $[-\pi,\pi]$ is divided into $M$ bins of equal size denoted by $B_j=[-1/2+ j/M,-1/2+(j+1)/M]$. Then QEEA estimates the quantities $q_j=\sum_{k:\lambda_k\in B_j}p_k$. Although QEEA was not designed for ground state energy estimation, one can use this algorithm to find the leftmost bin in which $q_j\geq p_0/2$, and thereby locate the ground state energy within a bin of size $M^{-1}$. While the maximal evolution time required scales as $\Or(\epsilon^{-1})$, the total evolution time of the original QEEA scales as $\Or(\epsilon^{-6})$. We analyze the cost of QEEA in Appendix~\ref{sec:qeep}, and show that the total runtime can be reduced to $\Or(\epsilon^{-4})$ for the ground state energy estimation in a straightforward way, yet this is still costly if high precision is required.


To the extent of our knowledge, none of the existing algorithms achieves all three goals listed on Page~\pageref{list:goals}. Some can have better maximal evolution time or total evolution time requirement, but the advantage always comes at the expense of some other aspects. {In Table~\ref{tab:compare_algs_full} we list the quantum algorithms discussed in this work and whether they satisfy each of the requirements.}


\begin{table}[th]
    \centering
    \makegapedcells
        \begin{tabular}{p{6cm}|ccc|l}
        \hline
        \hline
        \multirow{2}{*}{Algorithms}   & \multicolumn{3}{c|}{Requirements} & \multirow{2}{*}{Other issues} \\
         & (1) & (2) & (3) \\
        \hline
        QPE (textbook version) \cite{CleveEkertEtAl1998quantum,nielsen2002quantum} & \cmark & \xmark & \xmark &  \\
        QPE (high-confidence) \cite{knill2007optimal,poulin2009sampling,nagaj2009fast} & \cmark & \xmark & \cmark &  \\
        QPE (semi-classical QFT)  \cite{HigginsBerryEtAl2007,BerryHiggins2009}  &  \cmark & \cmark & \xmark &  \\
        QPE (iterative)  \cite{KitaevShenVyalyi2002}   & \cmark & \cmark & \cmark &   Needs exact eigenstate ($p_0=1$) \\
        The LCU approach  \cite{ge2019faster}   &  \xmark & \xmark & \xmark &   \\
        The binary search approach \cite{lin2020near}  & \cmark & \xmark & \xmark & \\
        VQE \cite{peruzzo2014variational,mcclean2016theory,omalley2016scalable}                           & \xmark & \cmark & ? & No precision guarantee \\
        QITE \cite{Motta2019QITE}   & \xmark & \cmark & ? & Requires state tomography\\
        QEEA \cite{somma2019quantum} & \xmark & \cmark & \cmark & \\
        Krylov subspace methods \cite{stair2020multireference,parrish2019quantum,Huggins2020nonorthogonalVQE}   & \xmark & \cmark & ?  & No precision guarantee \\
        This work & \cmark & \cmark & \cmark & \\
        \hline 
        \hline
        \end{tabular}
    \caption{{Quantum algorithms for estimating the ground state energy and whether they satisfy each of the three requirements on Page~\pageref{list:goals}. We recall that the requirements are (1) achieving the Heisenberg-limited precision scaling, (2) using at most one ancilla qubit, and (3) the maximal evolution time being at most $\Or(\epsilon^{-1}\polylog(\epsilon^{-1}p_0^{-1}))$.}}
    \label{tab:compare_algs_full}
\end{table}

In Table~\ref{tab:compare_algs}, we compare the maximal evolution time, the number of repetitions (the number of times we need to run the quantum circuit), and the total evolution time needed, using the three qubit-efficient methods that require only one ancilla qubit.

\begin{table}[th]
    \centering
    \makegapedcells
        \begin{tabular}{p{4cm}|c|c|c}
        \hline
        \hline
                    & Max evolution time & Repetitions & Total evolution time \\
        \hline
        This work (Corollary~\ref{cor:ground_energy})   & {$\wt{\Or}(\epsilon^{-1}\polylog(p_0^{-1}))$} & {$\wt{\Or}(p_0^{-2}\polylog(\epsilon^{-1}))$} & {$\wt{\Or}(\epsilon^{-1}p_0^{-2})$} \\
        \hline
        QPE with semi-classical Fourier transform    & {$\wt{\Or}(\epsilon^{-1}p_0^{-1})$} & {$\wt{\Or}(p_0^{-1}\polylog(\epsilon^{-1}))$} & {$\wt{\Or}(\epsilon^{-1}p_0^{-2})$} \\
        \hline
        QEEA  \cite{somma2019quantum} & {$\wt{\Or}(\epsilon^{-1}\polylog(p_0^{-1}))$} & {$\wt{\Or}(\epsilon^{-3}p_0^{-2})$} & {$\wt{\Or}(\epsilon^{-4}p_0^{-2})$} \\
        \hline 
        \hline
        \end{tabular}
    \caption{Comparison of  the maximal evolution time, the number of repetitions (the number of times we need to run the quantum circuit), and the total evolution time needed for estimating the ground state energy to within error $\epsilon$, using the three methods that require only one ancilla qubit: the method in this work, QPE with semi-classical Fourier transform that uses only one ancilla qubit, and the QEEA in Ref.~\cite{somma2019quantum}. The overlap between the initial state and the ground state is assumed to be $p_0$. 
The number of repetitions is also the number of times we need to prepare the initial state. An analysis of the QEEA in Ref.~\cite{somma2019quantum} can be found in Appendix~\ref{sec:qeep}.}
    \label{tab:compare_algs}
\end{table}


Finally, in a gate-based setting, the exact relations between the maximal evolution time and the circuit depth, and between the total evolution time and the total runtime, can be affected by the method we use to perform time evolution. Suppose we have access to a unitary circuit that performs $e^{-i\tau H}$ exactly for some fixed $\tau$. Then in order to run coherent time evolution for time $T$ we only need to use a circuit of depth $\Or(T)$. Therefore the circuit depth scales linearly with respect to the maximal evolution time. Similarly the total runtime scales linearly with respect to the total evolution time.

However, if we can only perform time evolution through Hamiltonian simulation, then these relations become more complicated. If advanced Hamiltonian simulation methods \cite{low2017optimal,low2019hamiltonian,BerryChildsKothari2015} can be used, the additional cost would be asymptotically negligible, since to ensure an $\epsilon'$ error for time evolution for time $T$ the cost is $\Or(T\polylog(T\epsilon'^{-1}))$. Hence the cost is only worse than that in the ideal case by a poly-logarithmic factor. However, for early fault-tolerant quantum computers, as discussed in Refs.~\cite{kivlichan2020improved,campbell2020early}, Trotter formulas \cite{Suzuki1991} are generally favored. Running time evolution for time $T$ with error at most $\epsilon'$ would entail a runtime of $\Or(T^{1+1/p}\epsilon'^{-1/p})$.
The additional cost will therefore prevent us from reaching the Heisenberg limit, though high-order Trotter formulas (i.e. with a large $p$) can allow us to get arbitrarily close to the Heisenberg limit. If one does not insist on having a Heisenberg-limited scaling, then randomized algorithms \cite{Campbell2019random,BerryChildsSuWangWiebe2020,ChenHuangKuengTropp2020quantum} may lead to lower gate count when only low precision is required.

In Appendix~\ref{sec:trotter} we analyze the circuit depth and the total runtime of our algorithm with time evolution performed using Trotter formulas. We also compare with QPE based on Trotter formulas. We found that when using Trotter formulas, our method has some additional advantage over QPE, achieving a polynomially better dependence on $p_0$ (i.e. $\eta$ in Appendix~\ref{sec:trotter}) in the total runtime. The total runtime scales like $\epsilon^{-1-o(1)}$ using our algorithm with Trotter formulas, and this only approximately reaches the Heisenberg limit $\epsilon^{-1}$ in terms of the total runtime. However, it is worth noting that none of the other methods can strictly reach the Heisenberg limit using Trotter formulas. Otherwise we can instead perform Hamiltonian simulation with the exponentially accurate methods to go below the Heisenberg limit, which is an impossible task. {Despite the sub-optimal asymptotic scaling, with tight error analysis \cite{ChildsSuTranWiebeZhu2021,TranChuEtAl2020destructive,ChildsSu2019nearly,YiCrosson2021spectral} Trotter formulae may outperform the advanced Hamiltonian simulation techniques discussed above in terms of the gate complexity, especially when only moderate accuracy is needed.}

\subsection{Organization}
\label{sec:organization}
The rest of the paper is organized as follows. In Section~\ref{sec:overview} we introduce the quantum circuit we are going to use, and introduce the CDF which is going to play an important role in our algorithm, and give an overview of the ground state energy estimation algorithm. In Section~\ref{sec:evaluate_ACDF} we discuss how to approximate the CDF. In Section~\ref{sec:find_ground_energy} we show that the ground state energy can be estimated by inverting the CDF, and present the complexity of our algorithm (Corollary~\ref{cor:ground_energy}). In Section~\ref{sec:invert_CDF} we present the details of our algorithm for post-processing the measurement data and analyze the complexity.

\section{Overview of the method}
\label{sec:overview}

We want to keep the quantum circuit we use as simple as possible. In this work we use the following circuit
\begin{equation}
\label{eq:Hadamard_circuit}
\begin{quantikz}
\lstick{$\ket{0}$} & \gate{\mathrm{H}}    & \ctrl{1}             & \gate{W} & \gate{\mathrm{H}} & \meter{} \\
\lstick{$\rho$}    & \qwbundle{} & \gate{e^{-ij\tau H}} & \qw      & \qw      & \qw  \\
\end{quantikz}
\end{equation}
where $\mathrm{H}$ is the Hadamard gate. We choose $W=I$ or $W=S^{\dagger}$ where $S$ is the phase gate, depending on the quantity we want to estimate. The quantum circuit is simple and uses only one ancilla qubit as required. The quantum circuit itself has been used in previous methods~\cite{KitaevShenVyalyi2002,somma2019quantum}. However, our algorithm uses a different strategy for querying the circuit and for classical post-processing, and results in lower total evolution time and/or maximal evolution time achieving the goals (1) and (3) listed on Page~\pageref{list:goals}.

This circuit requires controlled time evolution, which can be non-trivial to implement. {The idea of removing controlled operation in phase estimation has also been considered in \cite{BoixoSomma2008parameter}.} {Here we} can use ideas from Refs.~\cite{Huggins2020nonorthogonalVQE,LuBanulsCirac2020algorithms,RussoEtAl2020evaluating,OBrienEtAl2020error} to remove the need to perform controlled time evolution. But this type of approach requires an eigenstate of $H$ with known eigenvalue that is easy to prepare. In a second-quantized setting we can simply use the vacuum state. We will discuss this in detail in Appendix~\ref{sec:control_free}.

Using the circuit in \eqref{eq:Hadamard_circuit}, in order to estimate $\Re\Tr[\rho e^{-ij\tau H}]$, {where $j$ is an arbitrary integer and $\tau$ is a real number,} we set $W=I$. We introduce a random variable $X_j$ and set it to be $1$ when the measurement outcome is $0$, and $-1$ when the measurement outcome is $1$. Then
\begin{equation}
    \label{eq:measurement_outcome_expect_X}
    \mathbb{E}[X_j] = \Re\Tr[\rho e^{-ij\tau H}].
\end{equation}
Similarly for $\Im\Tr[\rho e^{-ij\tau H}]$, we set $W=S^{\dagger}$, and introduce a random variable $Y_j$ that depends in the same way on the measurement outcome. We have
\begin{equation}
    \label{eq:measurement_outcome_expect_Y}
    \mathbb{E}[Y_j] = \Im\Tr[\rho e^{-ij\tau H}].
\end{equation}
The parameter $\tau$ is chosen to normalize the Hamiltonian. Specifically, we choose $\tau$ so that $\tau\|H\|<\pi/3$. 
{We remark that $\tau$ should be chosen to be $\Or(\|H\|^{-1})$, and to avoid unnecessary overheads we want its scaling to be as close to $\Theta(\|H\|^{-1})$ as possible.} 


We can define a spectral measure of $\tau H$ associated with $\rho$. The spectral measure is
\begin{equation}
    \label{eq:spectral_measure_defn}
    p(x) = \sum_{k=0}^{K-1} p_k \delta(x-\tau\lambda_k),\quad x\in[-\pi,\pi].
\end{equation}
{Here $K$ is the number of different eigenvalues,  $\lambda_k$'s are the distinct eigenvalues arranged in ascending order, and each $p_k$ is the corresponding overlap, as defined in the Introduction.}
We extend it to a $2\pi$-periodic function by $p(x+2\pi)=p(x)$ 
{so that the Fourier transform can be performed on the interval $[0,2\pi]$ instead of the whole real line, which leads to a discrete Fourier spectrum.}
Note that because of the assumption $\tau\|H\|< \pi/3$, within the interval $[-\pi,\pi]$, $p(x)$ is supported in $(-\pi/3,\pi/3)$. Next we consider the cumulative distribution function (CDF) associated with this measure.

We define the $2\pi$-periodic Heaviside function by
\begin{equation}
\label{eq:perodic_heaviside}
H(x)=\begin{cases}
1,\ x\in[2k\pi,(2k+1)\pi) ,\\
0,\ x\in[(2k-1)\pi,2k\pi),
\end{cases}
\end{equation}
where $k\in\ZZ$.
The CDF is usually defined by
$
C(x) = \sum_{k:\lambda_k\leq x} p_k.
$
This is however not a $2\pi$-periodic function and thus will create technical difficulties in later discussions. Therefore instead of the usual definition, we define
\begin{equation}
\label{eq:CDF_def}
C(x) =  (H*p)(x),
\end{equation}
where $*$ denotes convolution. There is ambiguity at the jump discontinuities, and we define the values of $C(x)$ at these points by requiring $C(x)$ to be right-continuous.
{We} check that this definition agrees with the usual definition when $x\in (-\pi/3,\pi/3)$, which is the interval that contains all the eigenvalues of $\tau H${:}
{
\begin{equation*}
\begin{aligned}
C(x) &= \int_{-\pi}^{\pi}H(y)p(x-y)\dd y 
= \int_{0}^{\pi} p(x-y)\dd y \\
&= \int_{x-\pi}^x p(y)\dd y 
 = \int_{-\pi}^{x} p(y)\dd y 
= \sum_{k:\lambda_k\leq x} p_k.
\end{aligned}
\end{equation*}
}
{Consequently} $C(x)$ is a right-continuous non-decreasing function in $(-\pi/3,\pi/3)$.

If we could evaluate the CDF then we would be able to locate the ground state energy. 
{This is because the CDF is a piecewise constant function. Each of its jumps in the interval $(-\pi/3,\pi/3)$  corresponds to an eigenvalue of $\tau H$. In order to find the ground state energy we only need to find where $C(x)$ jumps from zero to a non-zero value. However, in practice we cannot evaluate the CDF exactly.}
We will see that we are able to approximate, in a certain sense as will be made clear later, the CDF using a function we call the approximate CDF (ACDF). To this end we first define an approximate Heaviside function $F(x)=\sum_{|j|\leq d}\hat{F}_j e^{ijx}$ such that \begin{equation}
\label{eq:heaviside_approx_criterion_main_text}
|F(x)-H(x)|\leq \epsilon,\quad  x \in [-\pi+\delta,-\delta]\cup[\delta,\pi-\delta].    
\end{equation} 
The construction of this function is provided in Lemma~\ref{lem:approx_heaviside_function}{, where $\hat{F}_j$ is written as $\hat{F}_{d,\delta,j}$. Here the parameters $d$ and $\delta$ need to be chosen to control the accuracy of this approximation, and their choices will be discussed later}. We {also} omit the $d$ and $\delta$ dependence in the {subscripts} for simplicity. With this $F(x)$ we define the ACDF by
\begin{equation}
\label{eq:ACDF_definition}
\wt{C}(x) = (F*p)(x).
\end{equation}
In Section~\ref{sec:evaluate_ACDF} we will discuss how to evaluate this ACDF using the circuit in \eqref{eq:Hadamard_circuit}.
The ACDF and CDF are related through the following inequalities
\begin{equation}
    \label{eq:relation_CDF_ACDF}
    C(x-\delta)-\epsilon \leq \wt{C}(x) \leq C(x+\delta)+\epsilon
\end{equation}
for any $|x|\leq \pi/3$, $0<\delta<\pi/6$ and $\epsilon>0$. We prove these inequalities in Appendix~\ref{sec:relation_CDF_ACDF}. {
Given the statistical estimation of the ACDF $\wt{C}(x)$, these inequalities enable us to estimate where the jumps of the CDF occur, which leads to an estimate of the ground state energy.}

By approximately evaluating the ACDF $\wt{C}(x)$ for certain chosen $x$, and through \cref{eq:relation_CDF_ACDF}, we can perform a binary search to locate the ground state energy in smaller and smaller intervals. The algorithm to do this and the total computational cost required to estimate the ground state energy to precision $\epsilon$ at a confidence level $1-\vartheta$ are discussed in Sections~\ref{sec:find_ground_energy} and \ref{sec:invert_CDF}.

\section{Evaluating the ACDF}
\label{sec:evaluate_ACDF}

\begin{figure}
    \centering
    \includegraphics[width=0.5\textwidth]{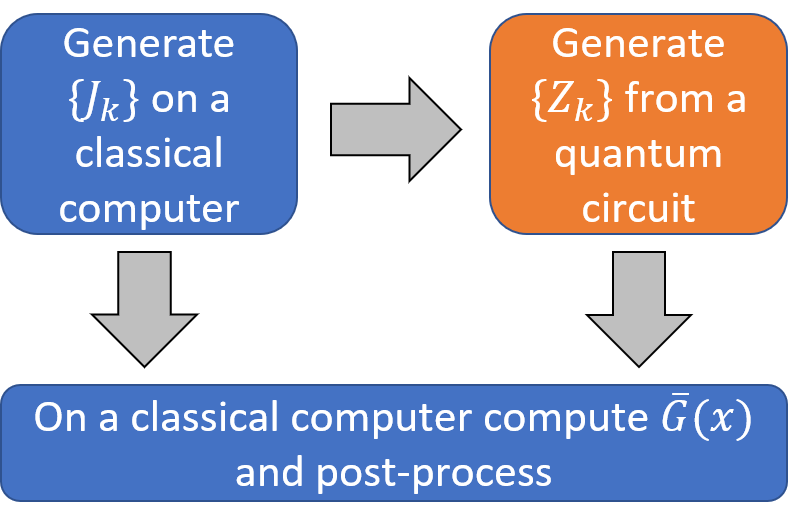}
    \caption{An illustration of the classical and quantum components of our algorithm: (1) generate samples $\{J_k\}$ from \eqref{eq:J_dist}; (2) use $\{J_k\}$ to generate $\{Z_k\}$ according to \eqref{eq:def_Z}; (3) compute $\bar{G}(x)$ through \eqref{eq:G_average}. The ground state energy estimate can be obtained through post-processing as discussed in Section~\ref{sec:find_ground_energy}. Only Step (2) needs to be performed on a quantum computer.}
    \label{fig:flow_chart}
\end{figure}

In this section we discuss how to evaluate the ACDF $\wt{C}(x)$. We first expand it in the following way:
\begin{equation}
\label{eq:evaluate_ACDF}
\begin{aligned}
\wt{C}(x) &= \sum_{|j|\leq d} \hat{F}_j \int_{-\pi}^{\pi}p(y)e^{ij(x-y)}\dd y \\
&= \sum_{|j|\leq d}\hat{F}_j e^{ijx} \Tr[\rho e^{-ij\tau H}],
\end{aligned}
\end{equation}
{where the spectral measure $p(x)$ is defined in \eqref{eq:spectral_measure_defn}. In going from the first line to the second line in the above equation we have used the fact that
\[
\int_{-\pi}^{\pi}p(y)e^{-ijy}\dd y = \sum_{k=0}^{K-1}\Tr[\rho\Pi_k]e^{-ij\tau\lambda_k}=\Tr[\rho e^{-ij\tau H}].
\]
}One might want to evaluate each $\Tr[\rho e^{-ij\tau H}]$ using Monte Carlo sampling since this quantity is equal to $\mathbb{E}[X_j+iY_j]$. If we want to evaluate all $\Tr[\rho e^{-ij\tau H}]$ to any accuracy at all, we need to sample each $X_j$ and $Y_j$ at least once. Then the total evolution time is is at least $\tau\sum_{|j|\leq d}|j|=\Omega(\tau d^2)$. 
Later we will see we need to choose $d=\Or(\epsilon^{-1}\polylog(\epsilon^{-1}p_0^{-1}))$ to ensure the ground state energy estimate has an additive error smaller than $\epsilon$. Hence this total evolution time would give rise to a $\epsilon^{-2}$ dependence in the runtime.

In order to avoid this $\epsilon^{-2}$ dependence, instead of evaluating all the terms we stochastically evaluate \eqref{eq:evaluate_ACDF} as a whole. The idea we are going to describe is {inspired by} the unbiased version of the multi-level Monte Carlo method \cite{rhee2012new,rhee2015unbiased}. We define a random variable $J$ that is drawn from $\{-d,-d+1,\ldots, d\}$, with probability
\begin{equation}
    \label{eq:J_dist}
    \Pr[J=j] = \frac{|\hat{F}_j|}{\mathcal{F}} ,
\end{equation}
where the normalization factor $\mathcal{F}=\sum_{|j|\leq d}|\hat{F}_j|$. We let $\theta_j$ be the argument of $\hat{F}_j$, i.e. $\hat{F}_j = |\hat{F}_j|e^{i\theta_j}$. Then
\begin{equation}
\label{eq:expectation_for_ACDF}
\begin{aligned}
    \mathbb{E}[(X_J + iY_J)e^{i(\theta_J+Jx)}] &= \sum_{|j|\leq d} \mathbb{E}[X_j + iY_j]e^{i(\theta_j+jx)}\Pr[J=j] \\
    &= \frac{1}{\mathcal{F}}\sum_{|j|\leq d} \Tr[\rho e^{-ij\tau H}]e^{ijx}\hat{F}_j \\
    &= \frac{\wt{C}(x)}{\mathcal{F}},
\end{aligned}
\end{equation}
where we have used \eqref{eq:measurement_outcome_expect_X} and \eqref{eq:measurement_outcome_expect_Y}. 
For simplicity we write $X_J$ and $Y_J$ into a complex random variable
\begin{equation}
\label{eq:def_Z}
Z = X_J + iY_J\in \{\pm 1 \pm i\}.    
\end{equation}
Therefore we can use 
\begin{equation}
\label{eq:unbiased_estimate_for_ACDF}
G(x;J,Z)=\mathcal{F}Ze^{i(\theta_J+Jx)}
\end{equation}
as an unbiased estimate of $\wt{C}(x)$.
The variance can be bounded by:
\begin{equation}
    \label{eq:variance_bound_for_the_estimate_of_ACDF}
    \begin{aligned}
    \operatorname{var}[G(x)] &\leq \mathcal{F}^2 \mathbb{E}[|X_J|^2+|Y_J|^2] 
    \leq 2\mathcal{F}^2.
    \end{aligned}
\end{equation}
Here we have used the fact that $|X_j|,|Y_j|\leq 1$.

From the above analysis, we can generate $N_s$ independent samples of $(J,Z)$, denoted by $(J_k,Z_k)$, $k=1,2,\ldots,N_s$, and then take the average 
\begin{equation}
\label{eq:G_average}
    \bar{G}(x) = \frac{1}{N_s}\sum_{k=1}^{N_s}G(x;J_k,Z_k),
\end{equation}
which can be used to estimate $\wt{C}(x)$ in an unbiased manner. The variance is upper bounded by $2\mathcal{F}^2/N_s$. In order to make the variance upper bounded by a given $\sigma^2$,  we need $N_s = \Or(\mathcal{F}^2/\sigma^2)$. The expected total evolution time is
\[
    N_s \tau \mathbb{E}[|J|] = \frac{ \mathcal{F} \tau}{\sigma^2} \sum_{|j|\leq d}|\hat{F}_j||j|.
\]
Furthermore, by Lemma~\ref{lem:approx_heaviside_function} (iii) we have $|\hat{F}_j|\leq C|j|^{-1}$ for some constant $C$. Therefore 
\[
\mathcal{F} = \Or(\log(d)), \quad \sum_{|j|\leq d}|\hat{F}_j||j| = \Or(d).
\]
The number of samples and the expected total evolution time are therefore 
\begin{equation}
    \label{eq:num_samples_and_total_evolution_time_G(x)}
    N_s = \Or\left(\frac{\log^2(d)}{\sigma^2}\right),\quad N_s \tau \mathbb{E}[|J|] = \Or\left(\frac{  \tau d\log(d)}{\sigma^2} \right),
\end{equation}
respectively. We can see that in this way we have avoided the $d^2$ dependence, which shows up in a term-by-term evaluation.


\begin{figure}[t]
    \centering
    \includegraphics[width=0.45\textwidth]{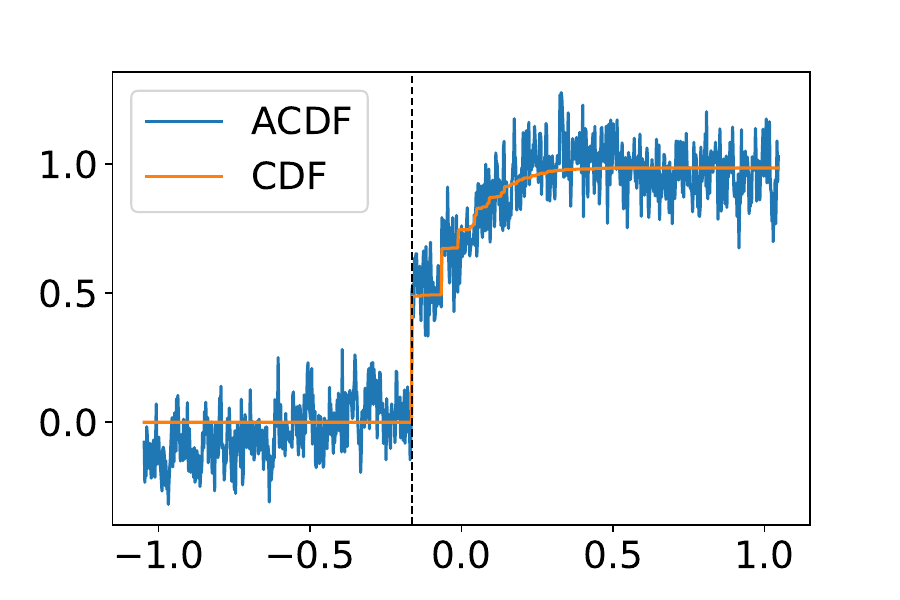}
    \includegraphics[width=0.45\textwidth]{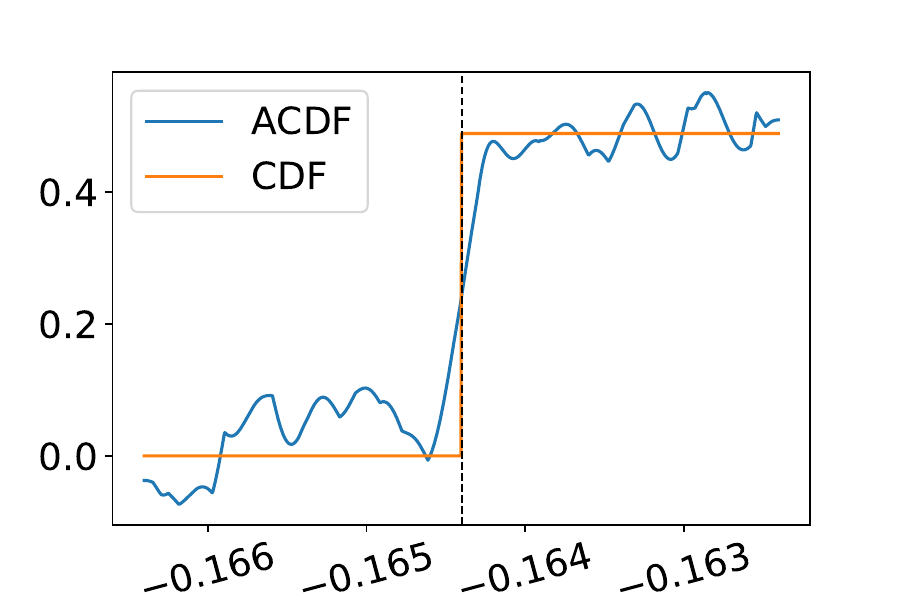}
    \caption{{$\bar{G}(x)$ and the CDF $C(x)$, for $x\in [-\pi/3,\pi/3]$ (left) and the zoom in view around $\tau\lambda_0$ (right), the ground state energy for $\tau H$ where $H$ is the Hamiltonian for the 8-site Hubbard model with $U/t=4$ 
    at half-filling. The dashed vertical line is $x=\tau\lambda_0$. The parameters are $\delta=2\times 10^{-4}$, $d=2\times 10^4$, $\tau=\pi/(4\|H\|)$. In total $3000$ samples are used.}}
    \label{fig:ACDF}
\end{figure}


\begin{figure}[t]
    \centering
    \begin{subfigure}{0.4\textwidth}
    \centering
    \includegraphics[width=0.95\linewidth]{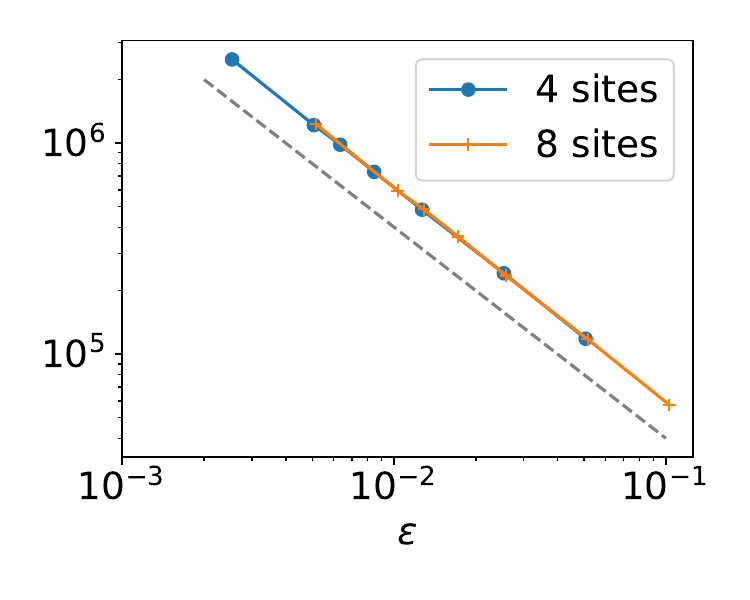}
    \caption{Total evolution time}
    \end{subfigure}
    \begin{subfigure}{0.4\textwidth}
    \centering
    \includegraphics[width=0.95\linewidth]{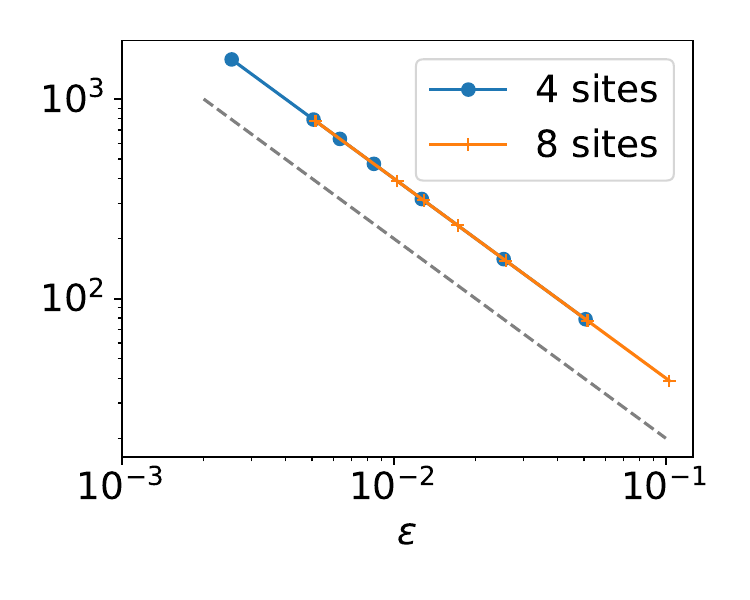}
    \caption{Maximal evolution time}
    \end{subfigure}
    \newline
    \begin{subfigure}{0.4\textwidth}
    \centering
    \includegraphics[width=0.95\linewidth]{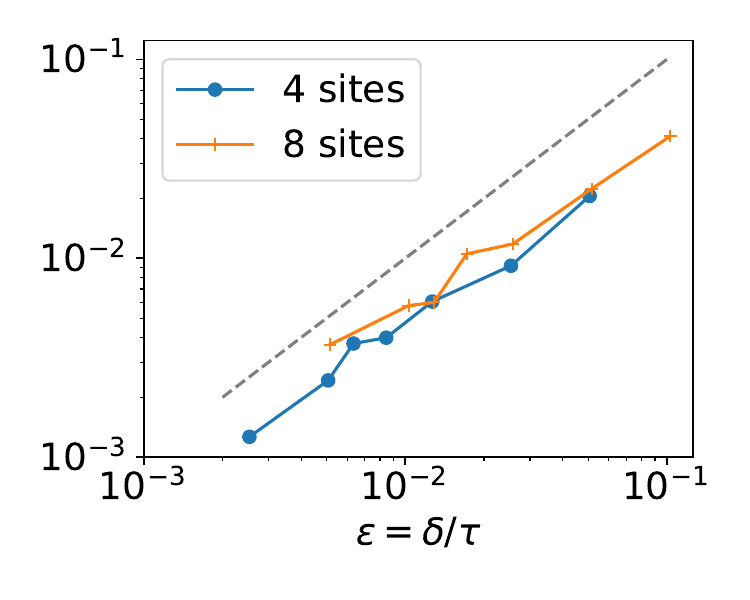}
    \caption{Average energy estimation error}
    \end{subfigure}
    \caption{{The total evolution time (a), maximal evolution time (b), and the average ground state energy estimation error (c), for 4-site and 8-site Hubbard model with $U/t=4$ at half-filling. The horizontal axis is the error threshold $\epsilon=\delta/\tau$. In (a) and (b) the grey dash lines have slope $-1$, and in (c) the grey dashed line (with slope $1$) shows the value of $\epsilon$. For each $\delta$, $d$ is chosen to be $d=4/\delta$, with $1800$ samples, and $\tau=\pi/(4\|H\|)$. The maximal evolution time is $\tau d=4\tau/\delta$.}}
    \label{fig:error_and_time}
\end{figure}

In Figure~\ref{fig:ACDF} we show the plot of the ACDF obtained through our method for the Fermi-Hubbard model. The details on this numerical experiment can be found in Appendix~\ref{sec:detail_numerical}. We can estimate the ground state energy from the ACDF in a heuristic manner: we let 
\[
x^{\star} = \inf\{x:\bar{G}(x)\geq {\eta}/2\},
\]
and $x^{\star}/\tau$ is an estimate for the ground state energy $\lambda_0$. {Here $\eta$ is chosen so that $p_0\geq \eta$.} In Section~\ref{sec:invert_CDF} we describe a more elaborate method to achieve the prescribed accuracy and confidence level. However, this heuristic method seems to work reasonably well in practice. In Figure~\ref{fig:error_and_time} we show the scaling of the ground state energy estimation error, the total evolution time, and the maximal evolution time, with respect to $\delta=\tau\epsilon$ ($\delta$ here is the parameter needed to construct $\{\hat{F}_j\}$ using Lemma~\ref{lem:approx_heaviside_function}), where $\epsilon$ is the allowed error. Both the total evolution time and the maximal evolution time are proportional to $\epsilon^{-1}$. The details on this numerical experiment can also be found in Appendix~\ref{sec:detail_numerical}.

\section{Estimating the ground state energy}
\label{sec:find_ground_energy}

In this section we discuss how to estimate the ground state energy with guaranteed error bound and confidence level from the samples generated on classical and quantum circuits discussed in Sections~\ref{sec:overview} and \ref{sec:evaluate_ACDF}. 
{First we note that the CDF $C(x)=0$ for all $-\pi/3<x<\tau\lambda_0$, and $C(x)>0$ for all $\tau\lambda_0\leq x<\pi/3$.}
{Therefore} getting the ground state energy out of the CDF can be seen as inverting the CDF{: we only need to find the smallest $x$ such that $C(x)>0$. One might consider performing a binary search to find such a point, but we run into a problem immediately: we only have access to estimates of $C(x)$ with statistical noise, and we cannot tell if the estimate is greater than zero is due to $C(x)>0$ or is merely due to statistical noise. We therefore need to make the search criterion more robust to noise.}

{Note that the CDF cannot take values between $0$ and $p_0$: $C(x)\geq p_0$ for $\tau\lambda_0\leq x<\pi/3$ and $C(x)=0$ for $-\pi/3<x<\tau\lambda_0$. Now suppose we know $p_0\geq \eta$, then for any $x$, rather than distinguishing between $C(x)=0$ and $C(x)>0$, we instead distinguish between $C(x)=0$ and $C(x)\geq \eta/2$ (here $\eta/4$ is chosen to be consistent with later discussion and it can be any number between $0$ and $1$ times $\eta$). In this setting, if the estimate of $C(x)$ is larger than $\eta/4$ then we tend to believe that $C(x)\geq \eta/2$, and if the estimate is smaller than $\eta/4$ then we tend to believe that $C(x)=0$. Thus we can tolerate an error that is smaller than $\eta/4$.}

%

{It may appear that we can find the ground state energy by performing a binary search for the point at which $C(x)$ first becomes larger than $\eta/2$. However, we can only estimate the continuous function $\wt{C}(x)$, which cannot uniformly approximate $C(x)$. This is because $C(x)$ has many jump discontinuities (each of which corresponds to an eigenvalue). 
As a result, we cannot perform this binary search procedure directly.}

{From the above discussion we need a search criterion that can be checked via $\wt{C}(x)$.} We consider the following {criterion}:
\begin{prob}[Inverting the CDF]
\label{prob:invert_CDF} For $0<\delta<\pi/6$, $0<\eta<1$, find $x^{\star}\in(-\pi/3,\pi/3)$ such that
\begin{equation}
    \label{eq:invert_CDF_criterion}
    C(x^{\star}+\delta)>\eta/2,\quad C(x^{\star}-\delta)<\eta.
\end{equation}
\end{prob}
{Firstly we verify that this can be checked via $\wt{C}(x)$. In \eqref{eq:relation_CDF_ACDF}, if we choose $x=x^{\star}$, $\epsilon=\eta/6$, then $\wt{C}(x^{\star})>(2/3)\eta$ implies $C(x^{\star})>\eta/2$, and $\wt{C}(x^{\star})<(5/6)\eta$ implies $C(x^{\star})<\eta$. Therefore we only need to find $x^{\star}$ satisfying $(2/3)\eta<\wt{C}(x^{\star})<(5/6)\eta$ to satisfy this criterion. Secondly we show that an $x^{\star}$ satisfying this criterion gives us an estimate of the ground state energy to within additive error $\delta/\tau$.} 
{Suppose we choose $\eta>0$ so that $p_0\geq \eta$. Then if we solve Problem~\ref{prob:invert_CDF} we will find an $x^{\star}$ such that $C(x^{\star}+\delta)>\eta/2 >0$ and $C(x^{\star}-\delta)<\eta\leq p_0$. $C(x^{\star}+\delta) >0$ indicates that $x^{\star}+\delta\geq\tau\lambda_0$. 
Since $C(x)$ cannot take value between $0$ and $p_0$, $C(x^{\star}-\delta)< p_0$ indicates $C(x^{\star}-\delta)=0$ and thus $x^{\star}-\delta<\tau\lambda_0$. Hence we know 
$
|x^{\star}-\tau\lambda_0|\leq \delta.
$
If we choose $\delta=\tau\epsilon$ and $\wt{\lambda}_0=x^{\star}/\tau$, then
\[
|\wt{\lambda}_0-\lambda_0|\leq \epsilon.
\]
Then $\wt{\lambda}_0$ is our desired estimate.}

Note that \eqref{eq:invert_CDF_criterion} is a weaker requirement than $\eta/2<C(x^{\star})<\eta$, for which due to the discontinuity of $C(x)$ the required $x^{\star}$ may not exist. However an $x^{\star}$ satisfying \eqref{eq:invert_CDF_criterion} must exist. In fact, let $a=\sup\{x\in(-\pi/3,\pi/3):C(x)\leq \eta/2\}$ and $b=\inf\{x\in(-\pi/3,\pi/3):C(x)\geq \eta\}$. Then {because $C(x)$ is monotonously increasing,} $a\leq b$, 
and any $x^{\star}\in[a-\delta,b+\delta)$ satisfies \eqref{eq:invert_CDF_criterion}.


Using the samples $\{J_k\}$ and $\{Z_k\}$ generated on classical and quantum circuits respectively, we are able to solve Problem~\ref{prob:invert_CDF}.
\begin{thm}[Inverting the CDF]
\label{thm:invert_CDF} 
With samples $\{J_k\}_{k=1}^{M}$ satisfying $|J_k|\leq d$ and $\{Z_k\}_{k=1}^{M}$, generated according to \eqref{eq:J_dist} and \eqref{eq:def_Z} respectively,
we can solve Problem~\ref{prob:invert_CDF} on a classical computer with probability at least $1-\vartheta$, for
$d=\Or(\delta^{-1}\log(\delta^{-1}\eta^{-1}))$ and $M=\Or(\eta^{-2}\log^2(d)(\log\log(\delta^{-1})+\log(\vartheta^{-1})))$.
The classical post-processing cost is 
{
\begin{equation}
    \label{eq:classical_post_processing_cost}
    \wt{\Or}(\eta^{-2}\log^{3}(\delta^{-1})\log(\vartheta^{-1})).
\end{equation}
}
To generate the samples $\{Z_k\}_{k=1}^M$ on a quantum circuit, the expected total evolution time 
and the maximal evolution time are
{
\begin{equation}
    \label{eq:total_time_time_evolution}
    \begin{aligned}
    \tau M \mathbb{E}[|J|] &= \wt{\Or}(\tau\delta^{-1}\eta^{-2}\log(\vartheta^{-1})),
    \end{aligned}
\end{equation}
}
and 
\begin{equation}
    \label{eq:coherence_time}
    \tau d = \Or\left(\tau\delta^{-1} \log(\delta^{-1}\eta^{-1}) \right).
\end{equation}
respectively.
\end{thm}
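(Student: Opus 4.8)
The plan is to solve Problem~\ref{prob:invert_CDF} by a noise-tolerant binary search that narrows an interval guaranteed to contain a valid $x^{\star}$, using the unbiased estimator $\bar{G}(x)$ from \eqref{eq:G_average} as a proxy for the continuous ACDF $\wt{C}(x)$. The starting point is the reduction already carried out above: choosing $\epsilon=\eta/6$ in the relation \eqref{eq:relation_CDF_ACDF}, any point with $(2/3)\eta<\wt{C}(x^{\star})<(5/6)\eta$ satisfies \eqref{eq:invert_CDF_criterion}, and such a point exists because $\wt{C}$ is continuous and climbs from nearly $0$ to at least $p_0\ge\eta$. Crucially I will not rely on monotonicity of $\wt{C}$; instead I use \eqref{eq:relation_CDF_ACDF} to convert a single accurate evaluation of $\wt{C}$ at a point $m$ into a location statement about $m$ relative to $\tau\lambda_0$: if $\wt{C}(m)>(2/3)\eta$ then $C(m+\delta)>\eta/2>0$, forcing $m\ge\tau\lambda_0-\delta$ (call $m$ \emph{right-certified}); if $\wt{C}(m)<(5/6)\eta$ then $C(m-\delta)<\eta\le p_0$, which since $C$ cannot take values in $(0,p_0)$ forces $C(m-\delta)=0$ and hence $m<\tau\lambda_0+\delta$ (\emph{left-certified}). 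For the general problem $\tau\lambda_0$ is replaced by the endpoints $a,b$ of the valid interval $[a-\delta,b+\delta)$ identified above.

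First I would set up the search. Maintain an interval $[x_L,x_R]\subseteq(-\pi/3,\pi/3)$ with the invariant that $x_R$ is right-certified and $x_L$ is left-certified, which is equivalent to the localization $\tau\lambda_0\in(x_L-\delta,x_R+\delta]$. At each step evaluate $\bar{G}$ at the midpoint $m$ and compare its real part against the single threshold $T=(3/4)\eta$: if the estimate is $\ge T$ we deem $m$ right-certified and set $x_R\leftarrow m$, otherwise we deem $m$ left-certified and set $x_L\leftarrow m$. Provided every evaluation used in the search is accurate to within $\eta/12$, each decision is sound, because $\bar{G}(m)\ge T$ then implies $\wt{C}(m)\ge(2/3)\eta$ and $\bar{G}(m)<T$ implies $\wt{C}(m)\le(5/6)\eta$. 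Since the initial width is $\Or(1)$ and we halve it each step, after $L=\Or(\log\delta^{-1})$ steps the width drops below $\delta$; running the Heaviside approximation with a constant-factor smaller fuzziness (e.g.\ $\delta/2$) and outputting the center of the final interval then yields a point $x^{\star}$ that provably lands in $[a-\delta,b+\delta)$, solving \eqref{eq:invert_CDF_criterion} and giving $|x^{\star}-\tau\lambda_0|\le\delta$ in the ground-state case. The choice $\epsilon=\eta/6$ together with Lemma~\ref{lem:approx_heaviside_function} fixes the degree $d=\Or(\delta^{-1}\log(\delta^{-1}\eta^{-1}))$, which is exactly \eqref{eq:coherence_time} after multiplying by $\tau$.

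It remains to control the estimation error at every point the search actually queries. Each sample of the estimator \eqref{eq:unbiased_estimate_for_ACDF} is bounded and has variance at most $2\mathcal{F}^2$ by \eqref{eq:variance_bound_for_the_estimate_of_ACDF}, so a median-of-means construction built from the shared pool $\{(J_k,Z_k)\}$ estimates $\wt{C}(m)$ to accuracy $\eta/12$ with failure probability $\vartheta'$ using $\Or(\mathcal{F}^2\eta^{-2}\log(\vartheta'^{-1}))$ samples. Taking $\vartheta'=\vartheta/L$ and a union bound over the $L=\Or(\log\delta^{-1})$ evaluations gives overall success probability $1-\vartheta$ with $M=\Or(\mathcal{F}^2\eta^{-2}(\log L+\log\vartheta^{-1}))=\Or(\eta^{-2}\log^2(d)(\log\log(\delta^{-1})+\log(\vartheta^{-1})))$, using $\mathcal{F}=\Or(\log d)$ from Lemma~\ref{lem:approx_heaviside_function}(iii); this is the stated sample count. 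The main obstacle lies precisely here: because the same pool of samples is reused at every step, the midpoints queried by the search are random and depend on earlier estimates, so the union bound ``over the $L$ visited points'' must be justified carefully — the points are not fixed in advance, and a naive union bound over all dyadic candidates would cost $\log\delta^{-1}$ rather than $\log\log\delta^{-1}$. I would handle this by arguing that the realized path is a deterministic function of the one fixed estimator and bounding the probability that the estimator fails at any queried point, exploiting the degree-$d$ trigonometric-polynomial structure of $\bar{G}(\cdot)$ to keep the effective number of events small.

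Finally, the cost bounds follow by bookkeeping. The classical work is dominated by evaluating $\bar{G}$ at $L=\Or(\log\delta^{-1})$ points, each evaluation summing $\Or(M)$ terms, giving $\Or(ML)=\wt{\Or}(\eta^{-2}\log^{3}(\delta^{-1})\log(\vartheta^{-1}))$ as in \eqref{eq:classical_post_processing_cost}. For the quantum cost, each of the $M$ samples runs the circuit \eqref{eq:Hadamard_circuit} with evolution time $\tau|J_k|$, so the expected total evolution time is $\tau M\,\mathbb{E}[|J|]$; since $\mathbb{E}[|J|]=\mathcal{F}^{-1}\sum_{|j|\le d}|\hat{F}_j||j|=\Or(d/\log d)$, this equals $\wt{\Or}(\tau\delta^{-1}\eta^{-2}\log(\vartheta^{-1}))$, matching \eqref{eq:total_time_time_evolution}, while the maximal evolution time is $\tau\max_k|J_k|\le\tau d$, matching \eqref{eq:coherence_time}.
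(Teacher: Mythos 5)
Your proposal is, in all essential respects, the paper's own proof: the paper's \texttt{INVERT\_CDF} (Algorithm~\ref{alg:invert_CDF}) is your noise-tolerant binary search, its \texttt{CERTIFY} (Algorithm~\ref{alg:certify}) is your thresholded estimator at $(3/4)\eta$ (majority vote over $N_b$ batches of size $N_s$, i.e.\ your median-of-means device), and the parameter choices ($\epsilon=\Theta(\eta)$ in Lemma~\ref{lem:approx_heaviside_function} giving $d=\Or(\delta^{-1}\log(\delta^{-1}\eta^{-1}))$, Chebyshev per batch, Chernoff across batches, reuse of a single sample pool across all $L=\Or(\log\delta^{-1})$ steps) and the cost bookkeeping coincide with the paper's up to constants.

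The only substantive divergence is the ``main obstacle'' you flag, and you should know how the paper treats it: it does not. The paper simply asserts that a per-call failure probability $\nu$ yields overall failure probability at most $L\nu$, while explicitly reusing the same $\{(J_k,Z_k)\}$ in every \texttt{CERTIFY} call; the fact that the queried midpoints are themselves functions of those same samples is never addressed. So your concern identifies a real informality in the paper's argument, not a defect peculiar to your write-up. That said, your proposed repair does not close the gap: observing that the realized path is a deterministic function of the samples only reduces the problem to a union bound over the $\Or(2^L)$ nodes of the decision tree, and a covering or uniform-convergence argument exploiting the degree-$d$ trigonometric structure of $\bar{G}$ costs $\Or(\log(d\eta^{-1}))=\Or(\log(\delta^{-1}\eta^{-1}))$ events, not $\Or(\log\log\delta^{-1})$. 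Either rigorous patch---the union bound over the whole tree, or fresh batches for each \texttt{CERTIFY} call---replaces the $\log\log(\delta^{-1})$ term in the stated $M$ by (a factor of) $\log(\delta^{-1})$; this leaves \eqref{eq:classical_post_processing_cost}, \eqref{eq:total_time_time_evolution} and \eqref{eq:coherence_time} intact because the extra logarithm is absorbed by $\wt{\Or}$, but it does not reproduce the theorem's expression for $M$ verbatim. In short, your proof is the paper's proof; at the single point where you are more careful than the paper, neither your sketch nor the paper's own argument is complete as stated, and the honest fixes cost a $\log(\delta^{-1})$ in place of the claimed $\log\log(\delta^{-1})$.
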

We will prove this theorem by constructing the algorithm for classical post-processing in Section~\ref{sec:invert_CDF}.
Since solving Problem~\ref{prob:invert_CDF} enables us to estimate the ground state energy as discussed above, from Theorem~\ref{thm:invert_CDF} we have the following corollary:
\begin{cor}[Ground state energy]
\label{cor:ground_energy}
With samples $\{J_k\}_{k=1}^{M}$ satisfying $|J_k|\leq d$ and $\{Z_k\}_{k=1}^{M}$, generated according to \eqref{eq:J_dist} and \eqref{eq:def_Z} respectively,
we can estimate the ground state energy $\lambda_0$ to within additive error $\epsilon$ on a classical computer with probability at least $1-\vartheta$,
if {$p_0\geq \eta$} for some known $\eta$, 
$d=\Or(\epsilon^{-1}\tau^{-1}\log(\epsilon^{-1}\tau^{-1}\eta^{-1}))$, and $M=\Or(\eta^{-2}\log^2(d)(\log\log(\epsilon^{-1}\tau^{-1})+\log(\vartheta^{-1})))$.
The classical post-processing cost is $\Or(\eta^{-2}\polylog(\epsilon^{-1}\tau^{-1}\eta^{-1}))$. The expected total evolution time and the maximal evolution time are $\Or(\epsilon^{-1}\eta^{-2}\polylog(\epsilon^{-1}\tau^{-1}\eta^{-1}))$ and $\Or(\epsilon^{-1}\polylog(\epsilon^{-1}\tau^{-1}\eta^{-1}))$ respectively.
\end{cor}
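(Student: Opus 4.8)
The plan is to obtain the corollary as an immediate consequence of Theorem~\ref{thm:invert_CDF}, combined with the reduction from ground state energy estimation to Problem~\ref{prob:invert_CDF} that was already established in the paragraphs preceding the problem statement. First I would fix the choice $\delta = \tau\epsilon$ and feed the samples $\{J_k\}_{k=1}^M$, $\{Z_k\}_{k=1}^M$ into Theorem~\ref{thm:invert_CDF}; with the stated $d$ and $M$ this produces, with probability at least $1-\vartheta$, a point $x^{\star}\in(-\pi/3,\pi/3)$ solving Problem~\ref{prob:invert_CDF}, i.e. satisfying $C(x^{\star}+\delta)>\eta/2$ and $C(x^{\star}-\delta)<\eta$.

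Conditioned on this success event, I would apply the reduction verbatim. The inequality $C(x^{\star}+\delta)>\eta/2>0$ forces $x^{\star}+\delta\geq\tau\lambda_0$. For the other side I would invoke the hypothesis $p_0\geq\eta$ together with the forbidden-gap property of the spectral CDF, namely that $C$ jumps from $0$ directly to at least $p_0$ at $\tau\lambda_0$ and hence takes no value in the open interval $(0,p_0)$: since $C(x^{\star}-\delta)<\eta\leq p_0$, we must have $C(x^{\star}-\delta)=0$, whence $x^{\star}-\delta<\tau\lambda_0$. Combining the two bounds yields $|x^{\star}-\tau\lambda_0|\leq\delta=\tau\epsilon$, so that the estimate $\wt{\lambda}_0:=x^{\star}/\tau$ satisfies $|\wt{\lambda}_0-\lambda_0|\leq\epsilon$. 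The confidence level transfers because the event on which the algorithm solves Problem~\ref{prob:invert_CDF} is contained in the event on which the energy estimate is valid, so the latter also has probability at least $1-\vartheta$.

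The remaining work is to propagate the complexity bounds of Theorem~\ref{thm:invert_CDF} through the substitution $\delta=\tau\epsilon$. The single observation that does all the work is $\tau\delta^{-1}=\epsilon^{-1}$: the maximal evolution time $\tau d=\Or(\tau\delta^{-1}\log(\delta^{-1}\eta^{-1}))$ becomes $\Or(\epsilon^{-1}\polylog(\epsilon^{-1}\tau^{-1}\eta^{-1}))$, and the expected total evolution time $\wt{\Or}(\tau\delta^{-1}\eta^{-2}\log(\vartheta^{-1}))$ becomes $\Or(\epsilon^{-1}\eta^{-2}\polylog(\epsilon^{-1}\tau^{-1}\eta^{-1}))$. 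Likewise $d$, $M$, and the classical post-processing cost translate directly by replacing $\delta^{-1}$ with $\epsilon^{-1}\tau^{-1}$ inside the logarithms.

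Since the corollary is purely bookkeeping layered on top of the theorem, I do not expect a genuine obstacle. The only steps requiring care are, first, making sure the reduction correctly invokes $p_0\geq\eta$ so that $C(x^{\star}-\delta)<\eta$ collapses $C(x^{\star}-\delta)$ to $0$ through the forbidden-gap property, and second, verifying that the explicit $\log(\vartheta^{-1})$ factor appearing in the theorem's cost estimates is consistently absorbed into (or treated as separate from) the $\polylog$ shorthand used in the corollary's statement, since the corollary suppresses the $\vartheta$-dependence inside its polylogarithmic factors.
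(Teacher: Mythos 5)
Your proposal is correct and follows essentially the same route as the paper: the paper likewise obtains the corollary by setting $\delta=\tau\epsilon$, invoking Theorem~\ref{thm:invert_CDF}, and applying the reduction (using that $C$ takes no value in $(0,p_0)$ and $p_0\geq\eta$) that it spells out just before Problem~\ref{prob:invert_CDF}, then substituting $\tau\delta^{-1}=\epsilon^{-1}$ into the cost bounds. Your closing remark about the $\log(\vartheta^{-1})$ factor being suppressed inside the corollary's $\polylog$ shorthand is a fair observation about the paper's own bookkeeping and does not affect correctness.
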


Usually the Heisenberg limit is defined in terms of the root-mean-square error (RMSE) of the estimate. In this paper we focus on ensuring the error of the ground state energy to be below a threshold $\epsilon$ with probability at least $1-\vartheta$. From Corollary~\ref{cor:ground_energy}, our algorithm only has a logarithmic dependence on $\vartheta^{-1}$, and the error can be at most $2\|H\|$, we can easily ensure the RMSE is $\Or(\epsilon)$ using the result by choosing $\vartheta=\Or(\epsilon^2\|H\|^{-2})$. We can see the total evolution time scaling with respect to $\epsilon$ is still $\wt{\Or}(\epsilon^{-1})$.

\begin{rem}[System size dependence]

{One might notice the absence of an explicit system size dependence in the evolution time scaling in Theorem~\ref{thm:invert_CDF} and Corollary~\ref{cor:ground_energy}. This is because, as mentioned before in the Introduction, the total evolution time depends on the system size indirectly through two parameters $\tau$ and $\eta$. Moreover, if we consider the dependence of the total runtime on the system size, we also need to account for the overhead that comes from performing Hamiltonian simulation. This overhead and the scaling of $\eta$ with respect to the system size are highly problem-specific and are independent from the tasks we are considering in this paper, and hence we will not discuss them in more detail. Because the Hamiltonian norm can generally be upper bounded by a polynomial of the system size, and the total evolution time dependence on $\tau^{-1}$ is poly-logarithmic, $\tau$ contributes a poly-logarithmic overhead in the system size dependence.} 
\end{rem}

\section{Inverting the CDF}
\label{sec:invert_CDF}

In this section we prove Theorem~\ref{thm:invert_CDF} by constructing the classical post-processing algorithm to solve Problem~\ref{prob:invert_CDF}
 {using samples from a quantum circuit}.
{Since we want to search for an $x^{\star}$ satisfying the requirement \eqref{eq:invert_CDF_criterion}, a natural idea is to use binary search. Our setting is somewhat different from the usual binary search setting, but we will show that a similar approach still works. The current setting differs from the setting of binary search mainly in two ways: first any $x^{\star}\in[\tau\lambda_0-\delta,\tau\lambda_0+\delta]$ satisfies the requirement \eqref{eq:invert_CDF_criterion} and can therefore be a target. When performing binary search we want to be able to tell if the target is to the left or right of a given $x$, but here the targets may be on both sides of $x$. When this happens there is some uncertainty as to how the algorithm will proceed next. However in our algorithm we will show that this does not present a problem. Also, because this algorithm is based on random samples, there is some failure probability in each search step. We will use a majority voting procedure to suppress the failure probability so that in the end the algorithm will produce a correct answer with probability arbitrarily close to $1$.}

{We suppose we are} given independent samples of $(J,Z)$ defined in \eqref{eq:J_dist} and \eqref{eq:def_Z} {generated from a quantum circuit}.
We denote these samples by
$\{(J_k,Z_k)\}_{k=1}^M$. 
We divide them into $N_b$ batches of size $N_s$, where $N_s N_b=M$. {This division is for the majority voting procedure we mentioned above.} The maximal evolution time needed to generate these samples is proportional to $\max_k |J_k|\leq d$. The expected total evolution time we will need is proportional to $M\mathbb{E}[|J|]$.

We first reduce Problem~\ref{prob:invert_CDF} into a decision problem. For any $x\in (-\pi/3,\pi/3)$, 
one of the following must be true:
\begin{equation}
\label{eq:certify_one_of_them}
C(x+\delta)>\eta/2,\quad \text{or} \quad C(x-\delta)<\eta.
\end{equation}
If there is a subroutine that tells us which one of the two is correct, or randomly picks one when both are correct, then we can use it to find $x^{\star}$. We assume such a subroutine, which uses $\{(J_k,Z_k)\}_{k=1}^M$, exists and denote it by the name $\texttt{CERTIFY}(x,\delta,\eta, \{(J_k,Z_k)\})$. The subroutine returns either $0$ or $1$: $0$ for $C(x+\delta)>\eta/2$ being true, and $1$ for $C(x-\delta)<\eta$ being true.

In Algorithm~\ref{alg:invert_CDF}, with $\texttt{CERTIFY}(x,\delta,\eta, \{(J_k,Z_k)\})$, we describe the algorithm to solve Problem~\ref{prob:invert_CDF}. This algorithm we denote by $\texttt{INVERT\_CDF}(\delta,\eta, \{(J_k,Z_k)\})$.  It runs as follows: we start with $x_{0,0}=-\pi/3$ and $x_{1,0}=\pi/3$. They are chosen so that $C(x_{1,0})>\eta/2$ and $C(x_{0,0})<\eta$. Let $\ell$ be the number of iterations we have performed, and $\ell=0$ at the beginning. At each iteration, we let $x_{\ell}=(x_{0,\ell}+x_{1,\ell})/2$, and run $\texttt{CERTIFY}(x_{\ell},(2/3)\delta,\eta, \{(J_k,Z_k)\})$. This tells us either $C(x_{\ell}+(2/3)\delta)>\eta/2$ or $C(x_{\ell}-(2/3)\delta)<\eta$. If the former then we let $x_{0,\ell+1}=x_{0,\ell}$, $x_{1,\ell+1}=x_{\ell}+(2/3)\delta$, and if the latter we let $x_{0,\ell+1}=x_{\ell}+(2/3)\delta$, $x_{1,\ell+1}=x_{1,\ell}$. This is done so that for each $\ell$ we have
\begin{equation}
\label{eq:invert_CDF_iter_always_true}
C(x_{0,\ell})<\eta,\quad C(x_{1,\ell})>\eta/2.
\end{equation}
We then let $\ell\leftarrow \ell+1$ and go to the next iteration. The algorithm stops once $x_{1,\ell}-x_{0,\ell}\leq 2\delta$. We denote the total number of iterations by $L$. The output is $x_L = (x_{0,L}+x_{1,L})/2$. Because \eqref{eq:invert_CDF_iter_always_true} holds for each iteration we have
\[
C(x_L-\delta) \leq C(x_{0,L}) <\eta,\quad C(x_L+\delta) \geq C(x_{1,L}) >\eta/2.
\]
Thus we can see $x_L$ satisfies the requirements for $x^{\star}$ in Problem~\ref{prob:invert_CDF}. 
The next question is, how many iterations does it take to satisfy the stopping criterion? Regardless of the outcome of the $\texttt{CERTIFY}$ subroutine, we always have
\[
x_{1,\ell+1}-x_{0,\ell+1}=\frac{1}{2}(x_{1,\ell}-x_{0,\ell}) + \frac{2}{3}\delta.
\]
From this we can see
\[
x_{1,\ell}-x_{0,\ell} = \frac{2\pi/3-(4/3)\delta}{2^{\ell}}+\frac{4}{3}\delta.
\]
Therefore it takes $L=\Or(\log(\delta^{-1}))$ iterations for the algorithm to stop.

\begin{algorithm}[ht]
\caption{$\texttt{INVERT\_CDF}$}
\label{alg:invert_CDF}
\begin{algorithmic}
\REQUIRE $\delta,\eta, \{(J_k,Z_k)\}$ 
\STATE $x_0\leftarrow -\pi/3$, $x_1\leftarrow \pi/3$;
\WHILE{$x_1-x_0> 2\delta$}
\STATE $x \leftarrow (x_0+x_1)/2$;
\STATE $u\leftarrow \texttt{CERTIFY}(x,(2/3)\delta,\eta, \{(J_k,Z_k)\})$;
\IF{$u=0$}
\STATE $x_1\leftarrow x+(2/3)\delta$;
\ELSE
\STATE $x_0\leftarrow x-(2/3)\delta$;
\ENDIF
\ENDWHILE
\ENSURE $(x_0+x_1)/2$
\end{algorithmic}
\end{algorithm}

Next we discuss how to construct the subroutine $\texttt{CERTIFY}(x,\delta,\eta,\{(J_k,Z_k)\})$. While we cannot directly evaluate the CDF $C(x)$ for any $x$, we can estimate the ACDF $\wt{C}(x)$ using the data $\{J_k\}$ and $\{Z_k\}$. We can let $\epsilon=\eta/8$ in \eqref{eq:heaviside_approx_criterion_main_text} and choose $d=\Or(\delta^{-1}\log(\delta^{-1}\eta^{-1}))$ according to Lemma~\ref{lem:approx_heaviside_function}. Then by \eqref{eq:relation_CDF_ACDF}, we have $C(x-\delta)\leq \wt{C}(x)+\eta/8$ and $C(x+\delta)\geq \wt{C}(x)-\eta/8$. One of the following must be true:
\begin{equation}
    \label{eq:ACDF_always_true}
    \wt{C}(x)>(5/8)\eta,\quad\text{or}\quad \wt{C}(x)<(7/8)\eta,
\end{equation}
then the former implies $C(x+\delta)>\eta/2$ and the latter $C(x-\delta)<\eta$. Therefore the $\texttt{CERTIFY}$ subroutine only needs to decide which one of the two is correct or to output a random choice when both are correct.

As discussed in Section~\ref{sec:evaluate_ACDF}, 
$\bar{G}(x)$ is an unbiased estimate of $\wt{C}(x)$. We use $\{J_k\}$ and $\{Z_k\}$ to get $N_b$ samples for $\bar{G}(x)$, denoted by $\bar{G}_r(x)$, via
\[
\bar{G}_r(x)= \frac{1}{N_s}\sum_{k=1}^{N_s}G(x;J_{(r-1)N_s+k},Z_{(r-1)N_s+k})
\]
for $r=1,2,\ldots,N_b$. Here $G(x;J,Z)$ is defined in \eqref{eq:unbiased_estimate_for_ACDF}.
For each $r$, we compare $\bar{G}_r(x)$ with $(3/4)\eta$. If $\bar{G}_r(x)>(3/4)\eta$ for a majority of batches, then we tend to believe $\wt{C}(x)>(5/8)\eta$ and output $0$ for $C(x+\delta)>\eta/2$. Otherwise, we tend to believe $\wt{C}(x)<(7/8)\eta$ and output $1$ for 
$C(x-\delta)<\eta$. {This is the majority voting procedure we mentioned earlier.} For the pseudocode for the subroutine see Algorithm~\ref{alg:certify}.

\begin{algorithm}[ht]
\caption{$\texttt{CERTIFY}$}
\label{alg:certify}
\begin{algorithmic}
\REQUIRE $x,\delta,\eta, \{(J_k,Z_k)\}$
\STATE $b\leftarrow 0$, $c\leftarrow 0$;
\FOR{$r=1,2,\ldots,N_b$}
\STATE $\bar{G}_r(x)\leftarrow (1/N_s)\sum_{k=1}^{N_s}G(x;J_{(r-1)N_s+k},Z_{(r-1)N_s+k})$; \COMMENT{$G(x;J,Z)$ defined in \eqref{eq:unbiased_estimate_for_ACDF}}
\IF{$\bar{G}_r(x)>(3/4)\eta$}
\STATE $c\leftarrow c+1$;
\ENDIF
\ENDFOR
\IF{$c\leq B/2$}
\STATE $b\leftarrow 1$;
\ENDIF
\ENSURE $b$
\end{algorithmic}
\end{algorithm}

In the $\texttt{CERTIFY}$ subroutine, an error occurs when $\wt{C}(x)>(5/8)\eta$ yet a majority of estimates $\bar{G}_r(x)$ are smaller than $(3/4)\eta$, or when $\wt{C}(x)<(7/8)\eta$ yet a majority of estimates $\bar{G}_r(x)$ are larger than $(3/4)\eta$. We need to make the probability of this kind of error occurring upper bounded by $\nu$. First we assume $\wt{C}(x)>(5/8)\eta$. Then for each $r$, by Markov's inequality, we have
\[
\Pr[\bar{G}_r(x)<(3/4)\eta] \leq \frac{64\operatorname{var}[\bar{G}_r(x)]}{\eta^2}.
\]
We want to make this probability at most $1/4$. Therefore we need $\operatorname{var}[\bar{G}_r(x)]\leq \eta^2/256$. To ensure this, by \eqref{eq:num_samples_and_total_evolution_time_G(x)} in which we let $\sigma^2 = \eta^2/256$, we can choose
\begin{equation}
\label{eq:choose_Ns_final}
N_s = \Or\left(\frac{\log^2(d)}{\eta^2}\right).
\end{equation}
Then by the Chernoff bound the probability of the majority of estimates $\bar{G}_r(x)$ being smaller than $(3/4)\eta$ is at most $e^{-C' N_b}$ for some constant $C'$. In order to make this probability bounded by $\nu$ we only need to let $N_b=\Or(\log(\nu^{-1}))$.

In the algorithm $\texttt{INVERT\_CDF}$, the subroutine $\texttt{CERTIFY}$ is used $L=\Or(\log(\delta^{-1}))$ times. If an error occurs in a single run of $\texttt{CERTIFY}$ with probability at most $\nu$ then in the total $L$ times we use this subroutine the probability of an error occurring is at most $L\nu$. Therefore in order to ensure that an error occurs with probability at most $\vartheta$ in $\texttt{INVERT\_CDF}$, we need to set $\nu=\vartheta/L$. Therefore $N_b=\Or(\log(L\vartheta^{-1}))=\Or(\log\log(\delta^{-1})+\log(\vartheta^{-1}))$.

The above analysis shows that in order to solve Problem~\ref{prob:invert_CDF} the total evolution time is $M\mathbb{E}[|J|]=N_b N_s\mathbb{E}[|J|]$. We evaluate $N_s\mathbb{E}[|J|]$ by \eqref{eq:num_samples_and_total_evolution_time_G(x)} in which we let $\sigma^2=\eta^2/256$ as discussed before when we estimate how large $N_s$ needs to be in \eqref{eq:choose_Ns_final}. Multiplying this by $N_b$ we have \eqref{eq:total_time_time_evolution}. Note here we do not need to multiply by $L$ because in each $\texttt{CERTIFY}$ subroutine we can reuse the same $\{J_k\}$, $\{Z_k\}$.
The maximal evolution time required is $\tau d$ and this leads to \eqref{eq:coherence_time}.
The main cost in classical post-processing comes from evaluating $\bar{G}_r(x)$. This needs to be done $LN_b$ times. Each evaluation involves $\Or(N_s)=\Or(\eta^{-2}\log^2(d))$ arithmetic operations. The total runtime for classical post-processing is therefore $LN_b N_s=LM$, which leads to \eqref{eq:classical_post_processing_cost}. Thus we have obtained all the cost estimates in Theorem~\ref{thm:invert_CDF} and proved the theorem.


\section{Discussions}


In this paper we presented an algorithm to estimate the ground state energy with Heisenberg-limited precision scaling. The quantum circuit we used requires only one ancilla qubit, and the maximal evolution time needed {per run} has a poly-logarithmic dependence on the overlap $p_0$. Such dependence on $p_0$ is exponentially better than that required by QPE using a similarly structured circuit {using semi-classical Fourier transform, as discussed in Section~\ref{sec:related}}. Both rigorous analysis and numerical experiments are done to validate the correctness and efficiency of our algorithm. 

Although our algorithm has a near-optimal dependence on the precision, the dependence on $p_0$ (more precisely, on its lower bound $\eta$), which scales as $p_0^{-2}$ in Corollary~\ref{cor:ground_energy}, is far from optimal compared to the $p_0^{-1/2}$ scaling in Refs.~\cite{ge2019faster,lin2020near}. Whether one can achieve this $p_0^{-1/2}$ scaling without using a quantum circuit with substantially larger maximal evolution time, and without using such techniques as LCU or block-encoding, remains an open question.

The probabilistic choice of the simulation time according to \cref{eq:J_dist} plays an important role in reducing the total evolution time. However, we may partially derandomize the algorithm following the spirit of the multilevel Monte Carlo (MLMC) method~\cite{Giles2015} in the classical setting. The method we developed for computing the approximate CDF in Section~\ref{sec:evaluate_ACDF} is in fact a quite general approach for evaluating expectation values from matrix functions. This method can act as a substitute of the LCU method in many cases, especially in a near-term setting. Using this method to compute other properties of the spectrum, such as the spectral density, is a direction for future work.

\section*{Acknowledgments}
This work was partially supported by the Air Force Office of Scientific
Research under award number FA9550-18-1-0095 (L.L. and Y.T.), and by the
Department of Energy under Grant No. DE-SC0017867 and under the Quantum
Systems Accelerator program (L.L.). We thank Andrew Baczewski and Barbara Terhal for helpful discussions.

\bibliographystyle{abbrvnat}
\bibliography{ref}

\appendix

\section{Constructing the approximate Heaviside function}
\label{sec:approx_heaviside}

In this appendix we construct the approximate Heaviside function satisfying the requirement in \eqref{eq:heaviside_approx_criterion_main_text}. We need to first construct a smeared Dirac function, which we will use as a mollifier in constructing the approximate Heaviside function. {To our best knowledge this particular version of smeared Dirac function has not been proposed in previous works.}

\begin{lem}
\label{lem:approx_dirac_function}
We define $M_{d,\delta}(x)$ by
\[
M_{d,\delta}(x) = \frac{1}{\mathcal{N}_{d,\delta}} T_d\left(1+2\frac{\cos(x)-\cos(\delta)}{1+\cos(\delta)}\right),
\]
where $T_d(x)$ is the $d$-th Chebyshev polynomial of the first kind, and 
\[
\mathcal{N}_{d,\delta} = \int_{-\pi}^{\pi} T_d\left(1+2\frac{\cos(x)-\cos(\delta)}{1+\cos(\delta)}\right)\dd x.
\]
Then
\begin{itemize}
    \item[(i)] $|M_{d,\delta}(x)|\leq \frac{1}{\mathcal{N}_{d,\delta}}$  for all $x\in[-\pi,-\delta]\cup[\delta,\pi]$, and $M_{d,\delta}(x)\geq -\frac{1}{\mathcal{N}_{d,\delta}}$ for all $x\in \RR$. 
    \item[(ii)]  $\int_{-\pi}^\pi M_{d,\delta}(x) \dd x = 1$, $1\leq \int_{-\pi}^\pi |M_{d,\delta}(x)| \dd x \leq 1+\frac{4\pi}{\mathcal{N}_{d,\delta}}$.
    \item[(iii)] When $\tan(\delta/2)\leq 1-1/\sqrt{2}$, we have
    \[
    \mathcal{N}_{d,\delta} \geq C_1 e^{d\delta/\sqrt{2}}\sqrt{\frac{\delta}{d}}\operatorname{erf}(C_2 \sqrt{d\delta})
    \]
    for some constants $C_1$ and $C_2$ that do not depend on $d$ or $\delta$. 
\end{itemize}
\end{lem}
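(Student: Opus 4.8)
\textbf{Proof proposal for Lemma~\ref{lem:approx_dirac_function}.}

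The plan is to analyze the function $M_{d,\delta}(x)$ by exploiting the structure of the Chebyshev polynomial $T_d$ together with the monotonicity of its argument. First I would set $g(x) = 1 + 2\frac{\cos(x)-\cos(\delta)}{1+\cos(\delta)}$ and observe that $g$ is even, that $g(0) = 1 + 2\frac{1-\cos(\delta)}{1+\cos(\delta)} > 1$, and that $g(x)$ decreases from this peak value as $|x|$ grows, crossing $g(\pm\delta) = 1$ exactly at $x = \pm\delta$ and dropping below $1$ (in fact into $[-1,1]$ and possibly beyond) for $|x| > \delta$. The key elementary facts are: $|T_d(y)| \le 1$ for $y \in [-1,1]$, while $T_d(y) \ge 1$ and is increasing for $y \ge 1$. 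For part (i), on $[\delta,\pi]$ (and symmetrically $[-\pi,-\delta]$) we have $g(x) \le 1$, and I must also check $g(x) \ge -1$ there so that $|T_d(g(x))| \le 1$; this requires verifying that the minimum of $g$ on $[\delta,\pi]$, attained at $x=\pi$ where $g(\pi) = 1 - 2\frac{1+\cos\delta}{1+\cos\delta}\cdot(\text{appropriate sign})$, stays in range — more carefully, $g(\pi) = 1 - 2\frac{1+\cos(\delta)}{1+\cos(\delta)} = -1$, so $g$ maps $[\delta,\pi]$ precisely onto $[-1,1]$. Hence $|T_d(g(x))| \le 1$ on this set, giving the bound $|M_{d,\delta}(x)| \le 1/\mathcal{N}_{d,\delta}$. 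The global lower bound $M_{d,\delta}(x) \ge -1/\mathcal{N}_{d,\delta}$ follows because for $|x| \le \delta$ we have $g(x) \ge 1$, so $T_d(g(x)) \ge 1 > 0$, while for $|x| > \delta$ we just showed $|T_d(g(x))| \le 1$; in either case the numerator is $\ge -1$.

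For part (ii), the normalization $\int_{-\pi}^{\pi} M_{d,\delta}(x)\,\dd x = 1$ is immediate from the definition of $\mathcal{N}_{d,\delta}$. For the $L^1$ bound, I would split the integral into $|x| \le \delta$, where $M_{d,\delta} \ge 0$ so its contribution equals the signed integral there, and $|x| > \delta$, where $|M_{d,\delta}(x)| \le 1/\mathcal{N}_{d,\delta}$ by part (i). Writing $\int |M| = \int M + 2\int_{\{M<0\}} |M|$ and noting $M < 0$ only on a subset of $\{|x|>\delta\}$ of total measure at most $2\pi$, the negative part contributes at most $2 \cdot 2\pi \cdot \frac{1}{\mathcal{N}_{d,\delta}}$, yielding $\int |M_{d,\delta}| \le 1 + \frac{4\pi}{\mathcal{N}_{d,\delta}}$; the lower bound $\int |M| \ge \int M = 1$ is trivial.

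The hard part, and the technical heart of the lemma, is part (iii): the lower bound on $\mathcal{N}_{d,\delta}$. The main contribution to $\mathcal{N}_{d,\delta} = \int_{-\pi}^\pi T_d(g(x))\,\dd x$ comes from the region $|x| \lesssim \delta$ where $g(x) \ge 1$ and $T_d(g(x))$ grows exponentially in $d$. Here I would use the representation $T_d(\cosh t) = \cosh(dt)$ valid for arguments $\ge 1$, writing $g(x) = \cosh(t(x))$ and estimating $t(x)$ for small $x$. Near $x = 0$, Taylor expansion gives $g(x) - 1 \approx \frac{x^2 \sin... }{}$ — more precisely $1 + \cos(x) \approx $ leads to $g(x) \approx 1 + \frac{\delta^2 - x^2}{1+\cos\delta}\cdot(\cdots)$, from which $\cosh t - 1 \sim \frac{1}{2}t^2$ gives $t(x) \sim $ a multiple of $\sqrt{\delta^2 - x^2}$; substituting produces an integrand of the form $\exp(d \cdot c\sqrt{\delta^2 - x^2})$. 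The condition $\tan(\delta/2) \le 1 - 1/\sqrt{2}$ is precisely what is needed to control the constant $c$ so that the exponent is at least $d\delta/\sqrt{2}$ at the peak. A Laplace-type (steepest descent) estimate of $\int_{-\delta}^{\delta} e^{d c\sqrt{\delta^2-x^2}}\,\dd x$ then produces the stated $e^{d\delta/\sqrt{2}}\sqrt{\delta/d}\,\operatorname{erf}(C_2\sqrt{d\delta})$ form, the error function arising from the Gaussian approximation of the exponent near the endpoints $x = \pm\delta$ rather than an interior maximum. I expect the delicate bookkeeping to lie in making the trigonometric expansion of $g$ uniform in $\delta$ and in justifying that the $|x| > \delta$ region cannot cancel the exponentially large positive contribution — which follows from part (i), since that region contributes at most $2\pi/\mathcal{N}_{d,\delta}$ in absolute value after accounting for sign, negligible against the exponential growth. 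Assembling these pieces and absorbing dimensionless factors into $C_1, C_2$ completes the bound.
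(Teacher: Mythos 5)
Your proposal is correct and takes essentially the same route as the paper: parts (i) and (ii) are argued identically, and for part (iii) the paper likewise isolates $\int_{-\delta}^{\delta}T_d(g(x))\,\dd x$, lower-bounds $T_d$ just above $1$ by an exponential (citing \cite[Lemma 13]{lin2020optimal}, whose content is exactly your $T_d(\cosh t)=\cosh(dt)$ identity), reduces the exponent to a parabola so the integral becomes a truncated Gaussian yielding the $\operatorname{erf}$ factor, and discards the $|x|>\delta$ region at an additive cost of $2\pi$, which is negligible against the exponential growth. One small correction to your narrative: the $\operatorname{erf}$ arises from truncating at $\pm\delta$ the Gaussian obtained by expanding the exponent around its \emph{interior} maximum at $x=0$ (e.g.\ via $\sqrt{\delta^2-x^2}\geq \delta-x^2/\delta$, or in the paper's version $\tan(\delta/2)\bigl(1-\tfrac{x^2}{4\sin^2(\delta/2)}\bigr)$), not from a Gaussian approximation near the endpoints $x=\pm\delta$, where the exponent instead has square-root behavior.
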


\begin{proof}
We first note that, by the property of Chebyshev polynomials, when $x\in[-\pi,-\delta]\cup[\delta,\pi]$, i.e.  $\cos(x)\leq \cos(\delta)$, we have $\left|T_d\left(1+2\frac{\cos(x)-\cos(\delta)}{1+\cos(\delta)}\right)\right|\leq 1$. This proves the first inequality in (i). Note that when $x\in [-\delta,\delta]$, $T_d\left(1+2\frac{\cos(x)-\cos(\delta)}{1+\cos(\delta)}\right)\geq -1$. Combine this and the first inequality with the fact that $M_{d,\delta}(x)$ is $2\pi$-periodic we prove the second inequality in (i).

The first part of (ii) is obvious because of the definition of $\mathcal{N}_{d,\delta}$. For the second part, we have $\int_{-\pi}^{\pi}|M_{d,\delta}(x)|\dd x \geq \int_{-\pi}^{\pi}M_{d,\delta}(x)\dd x =1$. Also
\begin{equation}
\begin{aligned}
\int_{-\pi}^{\pi}|M_{d,\delta}(x)|\dd x &= \left(\int_{-\pi}^{-\delta} + \int_{\delta}^{\pi} \right) |M_{d,\delta}(x)| \dd x + \int_{-\delta}^{\delta}M_{d,\delta}(x) \dd x \\
&\leq \frac{4\pi}{\mathcal{N}_{d,\delta}}+\left(\int_{-\pi}^{-\delta} + \int_{\delta}^{\pi} \right) M_{d,\delta}(x) \dd x + \int_{-\delta}^{\delta}M_{d,\delta}(x) \dd x \\
&= 1 + \frac{4\pi}{\mathcal{N}_{d,\delta}}.
\end{aligned}
\label{eqn:M_abs_int_bound}
\end{equation}
We now prove (iii). This requires lower bounding  $T_d\left(1+2\frac{\cos(x)-\cos(\delta)}{1+\cos(\delta)}\right)$ when $x\in [-\delta,\delta]$. 
For $\delta$ small enough so that
\[
\max_{x}2\frac{\cos(x)-\cos(\delta)}{1+\cos(\delta)} = 2\tan^2(\delta/2)\leq 3-\sqrt{2},
\]
which is equivalent to $\tan(\delta/2)\leq 1-1/\sqrt{2}$,
we can use \cite[Lemma 13]{lin2020optimal} to provide a lower bound for the $x\in [-\delta,\delta]$ case:
\begin{equation}
\label{eq:Td_bound_1}
T_d\left(1+2\frac{\cos(x)-\cos(\delta)}{1+\cos(\delta)}\right) \geq \frac{1}{2}\exp\left(\sqrt{2}d\sqrt{\frac{\cos(x)-\cos(\delta)}{1+\cos(\delta)}}\right).
\end{equation}
By the elementary inequality
$
|\sin(x)|\leq |x|,
$
we have
\[
\begin{aligned}
\sqrt{\frac{\cos(x)-\cos(\delta)}{1+\cos(\delta)}} &= \sqrt{\tan^2\left(\frac{\delta}{2}\right)-\frac{\sin^2(x/2)}{\cos^2(\delta/2)}} 
= \tan\left(\frac{\delta}{2}\right)\sqrt{1-\frac{\sin^2(x/2)}{\sin^2(\delta/2)}} \\
&\geq \tan\left(\frac{\delta}{2}\right)\left(1-\frac{\sin^2(x/2)}{\sin^2(\delta/2)}\right)
\geq \tan\left(\frac{\delta}{2}\right)\left(1-\frac{x^2}{4\sin^2(\delta/2)}\right).
\end{aligned}
\]
Substituting this into \eqref{eq:Td_bound_1} we have
\[
T_d\left(1+2\frac{\cos(x)-\cos(\delta)}{1+\cos(\delta)}\right) \geq
\frac{1}{2}e^{\sqrt{2}d\tan(\delta/2)}\exp\left(-\frac{dx^2}{\sqrt{2}\sin(\delta)}\right).
\]
Then 
\begin{equation*}
\begin{aligned}
\mathcal{N}_{d,\delta} &\geq \int_{-\delta}^{\delta} T_d\left(1+2\frac{\cos(x)-\cos(\delta)}{1+\cos(\delta)}\right) \dd x -2\pi \\
&\geq \frac{1}{2}e^{\sqrt{2}d\tan(\delta/2)} \sqrt{\frac{\sqrt{2}\pi\sin(\delta)}{d}}\operatorname{erf}\left(\sqrt{\frac{d}{\sqrt{2}\sin(\delta)}}\delta\right) -2\pi\\
&\geq C_1 e^{d\delta/\sqrt{2}}\sqrt{\frac{\delta}{d}}\operatorname{erf}(C_2 \sqrt{d\delta}),
\end{aligned}
\end{equation*}
for $\delta\in(0,\pi/2)$ and some constants $C_1,C_2>0$. This proves (iii).

\end{proof}

\begin{figure}
    \centering
    \includegraphics[width=0.5\textwidth]{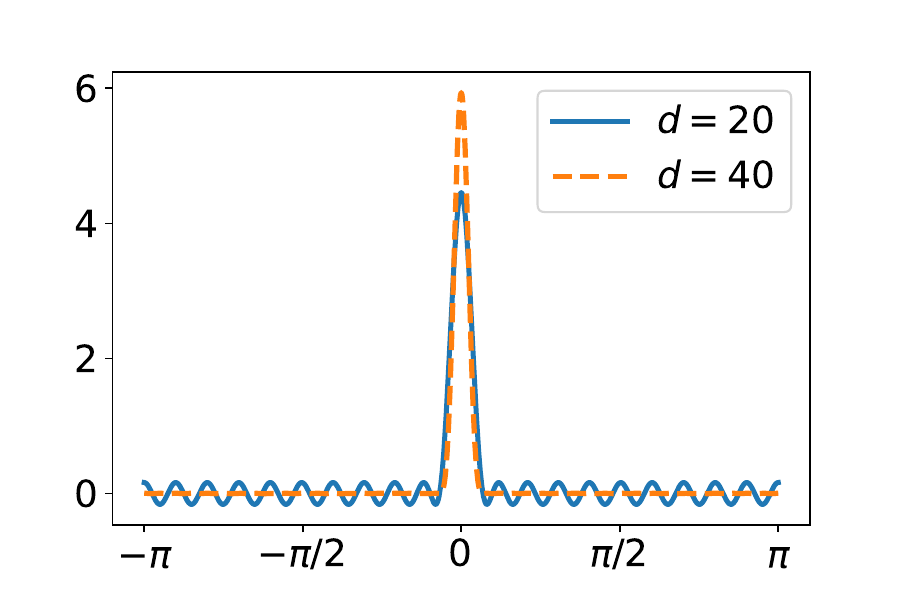}
    \caption{Illustration of $M_{d,\delta}(x)$ for $\delta=0.2$, $d=20,40$.}
    \label{fig:M_plot}
\end{figure}
A plot of $M_{d,\delta}$ is shown in Figure~\ref{fig:M_plot}. As we can see it roughly takes the shape of a Dirac function. We then use it as a mollifier to approximate the Heaviside function using the convolution of $M_{d,\delta}$ and the Heaviside function.

\begin{lem}
\label{lem:approx_heaviside_function}
Let $H(x)$ be the periodic Heaviside function defined in \eqref{eq:perodic_heaviside}.
For any $\delta\in(0,\pi/2)$ such that $\tan(\delta/2)\leq 1-1/\sqrt{2}$ and $\epsilon>0$, there exists $d=\Or(\delta^{-1}\log(\delta^{-1}\epsilon^{-1}))$, and a $2\pi$-periodic function $F_{d,\delta}(x)$ of the form
\[
F_{d,\delta}(x) = \frac{1}{\sqrt{2\pi}}\sum_{k=-d}^d \hat{F}_{d,\delta,k}e^{ikx},
\]
satisfying
\begin{itemize}
    \item[(i)] $-\epsilon/2\leq  F_{d,\delta}(x) \leq 1+\epsilon$ for all $x\in\RR$;
    \item[(ii)] $\vert F_{d,\delta}(x) - H(x) \vert\leq \epsilon$ for all $x\in [-\pi+\delta,-\delta]\cup[\delta,\pi-\delta]$;
    \item[(iii)] $ |\hat{F}_{d,\delta,k}| \leq 2(1+\epsilon)/(\sqrt{2\pi}|k|)$ for $k\neq 0$.
\end{itemize}
\end{lem}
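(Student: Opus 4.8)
The plan is to define the approximate Heaviside function as the periodic convolution of the smeared Dirac mollifier from Lemma~\ref{lem:approx_dirac_function} with the Heaviside function,
\[
F_{d,\delta} = M_{d,\delta} * H, \qquad (f*g)(x)=\int_{-\pi}^{\pi}f(y)g(x-y)\,\dd y,
\]
using the same convolution convention as elsewhere in the paper. Since $M_{d,\delta}(x)=T_d(\,\cdot\,)/\mathcal{N}_{d,\delta}$ is a polynomial of degree $d$ in $\cos(x)$, it is a trigonometric polynomial of degree $d$; because convolution multiplies Fourier coefficients, $\hat{F}_{d,\delta,k}=0$ for $|k|>d$, which already yields the claimed finite expansion $F_{d,\delta}(x)=\frac{1}{\sqrt{2\pi}}\sum_{|k|\le d}\hat{F}_{d,\delta,k}e^{ikx}$. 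The entire argument then reduces to choosing $\mathcal{N}_{d,\delta}$ large enough — concretely $\mathcal{N}_{d,\delta}\ge 4\pi/\epsilon$ — and reading off the corresponding $d$ from part (iii) of Lemma~\ref{lem:approx_dirac_function}.

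For property (ii), I would use $\int_{-\pi}^{\pi}M_{d,\delta}=1$ (part (ii) of the previous lemma) to write $F_{d,\delta}(x)-H(x)=\int_{-\pi}^{\pi}M_{d,\delta}(y)\bigl(H(x-y)-H(x)\bigr)\,\dd y$. For $x\in[\delta,\pi-\delta]$, one checks that $|y|<\delta$ forces $x-y\in(0,\pi)$, so $H(x-y)=H(x)=1$ and the integrand vanishes; the same holds on $[-\pi+\delta,-\delta]$ with $H\equiv0$. Hence the integral is supported on $|y|\ge\delta$, where $|M_{d,\delta}(y)|\le 1/\mathcal{N}_{d,\delta}$ by part (i), giving $|F_{d,\delta}(x)-H(x)|\le 2\pi/\mathcal{N}_{d,\delta}\le\epsilon$. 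For property (i), I would split $M_{d,\delta}=M^{+}-M^{-}$ into positive and negative parts and use $0\le H\le 1$: this gives $F_{d,\delta}\le\int M^{+}=\tfrac12\bigl(\int|M_{d,\delta}|+1\bigr)\le 1+2\pi/\mathcal{N}_{d,\delta}$ and $F_{d,\delta}\ge-\int M^{-}=-\tfrac12\bigl(\int|M_{d,\delta}|-1\bigr)\ge -2\pi/\mathcal{N}_{d,\delta}$, where the bound $\int_{-\pi}^{\pi}|M_{d,\delta}|\le 1+4\pi/\mathcal{N}_{d,\delta}$ from part (ii) controls both tails. Taking $\mathcal{N}_{d,\delta}\ge 4\pi/\epsilon$ then yields $-\epsilon/2\le F_{d,\delta}\le 1+\epsilon$.

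For property (iii), the cleanest route is to differentiate. The distributional derivative of the periodic square wave is $H'(x)=\sum_{k}\delta(x-2k\pi)-\sum_{k}\delta(x-(2k+1)\pi)$, so that $F_{d,\delta}'=M_{d,\delta}*H'=M_{d,\delta}(x)-M_{d,\delta}(x-\pi)$. Matching Fourier coefficients (with the $1/\sqrt{2\pi}$ normalization of the lemma) and using $\widehat{M_{d,\delta}(\,\cdot-\pi)}_{k}=(-1)^{k}\hat{M}_{d,\delta,k}$ gives $ik\,\hat{F}_{d,\delta,k}=(1-(-1)^{k})\hat{M}_{d,\delta,k}$, hence $|\hat{F}_{d,\delta,k}|\le 2|\hat{M}_{d,\delta,k}|/|k|$ for $k\ne0$ (the coefficient vanishes for even $k$, so the bound is trivial there). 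Finally $|\hat{M}_{d,\delta,k}|\le\frac{1}{\sqrt{2\pi}}\int_{-\pi}^{\pi}|M_{d,\delta}|\le\frac{1+\epsilon}{\sqrt{2\pi}}$, again by part (ii) together with $\mathcal{N}_{d,\delta}\ge 4\pi/\epsilon$, which delivers the stated estimate $|\hat{F}_{d,\delta,k}|\le 2(1+\epsilon)/(\sqrt{2\pi}|k|)$.

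It remains to certify the required order of $d$. From part (iii) of Lemma~\ref{lem:approx_dirac_function}, $\mathcal{N}_{d,\delta}\ge C_1 e^{d\delta/\sqrt{2}}\sqrt{\delta/d}\,\operatorname{erf}(C_2\sqrt{d\delta})$; requiring this to exceed $4\pi/\epsilon$ and solving for $d$ gives $d=\Or(\delta^{-1}\log(\delta^{-1}\epsilon^{-1}))$, where the extra $\delta^{-1}$ inside the logarithm arises from absorbing the sub-exponential prefactors $\sqrt{\delta/d}$ and $\operatorname{erf}(C_2\sqrt{d\delta})$. I expect the main obstacle to be precisely this bookkeeping: making the self-consistent choice of $d$ rigorous (a short bootstrap, plugging a trial $d$ of the claimed order back into the inequality to verify the exponential term dominates) while keeping the polynomial and error-function factors bounded below. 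By contrast, the three approximation and decay estimates follow cleanly and almost mechanically once the mollifier properties of Lemma~\ref{lem:approx_dirac_function} are in hand.
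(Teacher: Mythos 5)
Your proposal is correct and follows essentially the same route as the paper: you construct $F_{d,\delta}=M_{d,\delta}*H$, invoke Lemma~\ref{lem:approx_dirac_function} to control the error outside $[-\delta,\delta]$ via $\mathcal{N}_{d,\delta}\geq 4\pi/\epsilon$, and solve for $d=\Or(\delta^{-1}\log(\delta^{-1}\epsilon^{-1}))$ exactly as in the paper's proof. Your two local variations — the positive/negative-part split for (i), and deriving (iii) from the distributional identity $F_{d,\delta}'=M_{d,\delta}(x)-M_{d,\delta}(x-\pi)$ instead of the paper's explicit computation of $\hat{H}_k$ via the convolution theorem — are equivalent in substance and equally rigorous, since $F_{d,\delta}$ is a trigonometric polynomial and the coefficient matching $ik\hat{F}_{d,\delta,k}=(1-(-1)^k)\hat{M}_{d,\delta,k}$ encodes the same information as $\hat{F}_{d,\delta,k}=\sqrt{2\pi}\hat{M}_{d,\delta,k}\hat{H}_k$.
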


\begin{proof}

We first construct the function $F_{d,\delta}(x)$.
Let $M_{d,\delta}(x)$ be the mollifier in Lemma~\ref{lem:approx_dirac_function}.
Because of Lemma~\ref{lem:approx_dirac_function} (i) and (ii)
$M_{d,\delta}(x)$ can be used as to mollify non-smooth functions. Also because $T_d(x)$ is a polynomial of degree $d$, the Fourier coefficients 
\[
\hat{M}_{d,\delta,k} = \frac{1}{\sqrt{2\pi}} \int_{-\pi}^{\pi} M_{d,\delta}(x) e^{-ikx}\dd x
\]
are non-zero only for $-d\leq k\leq d$. Also
\begin{equation}
\label{eq:M_hat_bound}
\left\vert\hat{M}_{d,\delta,k}\right\vert \leq \frac{1}{\sqrt{2\pi}} \int_{-\pi}^{\pi} |M_{d,\delta}(x)| \dd x=\frac{1+\epsilon}{\sqrt{2\pi}}.
\end{equation}
We construct $F_{d,\delta}$ by mollifying the Heaviside function with $M_{d,\delta}(x)$:
\begin{equation}
\label{eq:F_construction}
F_{d,\delta}(x) = (M_{d,\delta}* H)(x) = \int_{-\pi}^{\pi} M_{d,\delta}(x')H(x-x')\dd x'.
\end{equation}

We then show we can choose $d=\Or(\delta^{-1}\log(\delta^{-1}\epsilon^{-1}))$ to satisfy (ii). We have
\[
\begin{aligned}
\left\vert F_{d,\delta}(x)- H(x) \right\vert &=\left\vert \int_{-\pi}^{\pi}M_{d,\delta}(x')(H(x-x')-H(x))\dd x' \right\vert \\
&\leq \int_{-\pi}^{\pi}M_{d,\delta}(x')\vert H(x-x')-H(x)\vert \dd x'.
\end{aligned}
\]
For any $x$ such that $|x|\in[\delta,\pi-\delta]$, first we consider the case where $|x'|< \delta$. In this case $H(x-x')=H(x)$ and therefore the integrand $M_{d,\delta}(x')\vert H(x-x')-H(x)\vert=0$. Then we consider the case  where $|x'|\geq \delta$. By Lemma~\ref{lem:approx_dirac_function} (i) we have $M_{d,\delta}(x')\leq 2/\mathcal{N}_{d,\delta}$, and as $\vert H(x-x')-H(x)\vert\leq 1$,  $M_{d,\delta}(x')\vert H(x-x')-H(x)\vert\leq 2/\mathcal{N}_{d,\delta}$. Thus for any $x$ such that $|x|\in[\delta,\pi-\delta]$,
\begin{equation}
\label{eq:fourier_approx_error}
\left\vert F_{d,\delta}(x)- H(x) \right\vert \leq \frac{4\pi}{\mathcal{N}_{d,\delta}}.
\end{equation}

If we want to keep the approximation error for $x\in[-\pi+\delta,-\delta]\cup[\delta,\pi-\delta]$ to be below $\epsilon$, we will need, by Lemma~\ref{lem:approx_dirac_function} (i) and \eqref{eq:fourier_approx_error},
\[
C_1 e^{d\delta/\sqrt{2}}\sqrt{\frac{\delta}{d}}\operatorname{erf}(C_2 \sqrt{d\delta}) \geq \frac{4\pi}{\epsilon}.
\]
It can be checked that we can choose
$
d = \Or(\delta^{-1}\log(\epsilon^{-1}\delta^{-1}))
$
to achieve this. 

We then show this choice of $d$ ensures (i) as well. From \cref{eqn:M_abs_int_bound}
\[
F_{d,\delta}(x) \leq \int_{-\pi}^{\pi} |M_{d,\delta}(y)|\dd y \leq 1+\frac{4\pi}{\mathcal{N}_{d,\delta}}\leq 1+\epsilon 
\]
and by the second inequality in Lemma~\ref{lem:approx_dirac_function} (i)
\[
F_{d,\delta}(x) \geq -\frac{1}{\mathcal{N}_{d,\delta}}\int_{-\pi}^{\pi} H(y)\dd y = -\frac{2\pi}{\mathcal{N}_{d,\delta}}\geq -\frac{\epsilon}{2}.
\]

Finally we prove our construction satisfies (iii).
Because $F_{d,\delta}(x)$ is defined through a convolution, its Fourier coefficients can be obtained through
\[
\hat{F}_{d,\delta,k} = \sqrt{2\pi}\hat{M}_{d,\delta,k}\hat{H}_k,
\]
where $\hat{H}_k$'s are the Fourier coefficients of the rectangle function $H(x)$.
Therefore $\hat{F}_{d,\delta,k}\neq 0$ only for $-d\leq k\leq d$. 
Because of \eqref{eq:M_hat_bound}, we have
\[
|\hat{F}_{d,\delta,k}| \leq  (1+\epsilon)|\hat{H}_{k}|.
\]
Since when $k\neq 0$
\[
\begin{aligned}
\hat{H}_{k} &= \frac{1}{\sqrt{2\pi}}\int_{-\pi}^{\pi}H(x)e^{-ikx}\dd x \\
&=\begin{cases}
\frac{2}{i\sqrt{2\pi}k} &\ 2\nmid k \\
0 &\ 2\mid k
\end{cases}
\end{aligned}
\]
we have (iii).

\end{proof}

\section{The relation between the CDF and the approximate CDF}
\label{sec:relation_CDF_ACDF}

In this appendix we prove \eqref{eq:relation_CDF_ACDF}. Let $0<\delta<\pi/6$.
First we have a $2\pi$-periodic function $F(x)$ from Lemma~\ref{lem:approx_heaviside_function} that satisfies
\begin{equation*}
    |F(x)-H(x)|\leq \epsilon,\quad  x\in [-\pi+\delta,-\delta]\cup[\delta,\pi-\delta],
\end{equation*}
and $F(x)\in[0, 1]$ for all $x\in\RR$. We further define $F_L(x)=F(x-\delta)$ and $F_R(x)=F(x+\delta)$. They satisfy
\begin{equation}
    \label{eq:H_shifted_bound}
    \begin{aligned}
    |F_L(x)-H(x)|\leq \epsilon,&\quad x\in [-\pi+2\delta,0]\cup[2\delta,\pi], \\
    |F_R(x)-H(x)|\leq \epsilon,&\quad x\in [-\pi,-2\delta]\cup[0,\pi-2\delta].
    \end{aligned}
\end{equation}

We define the some functions related to the ACDF as follows:
\begin{equation}
    \label{eq:CLR}
    \wt{C}_L(x)=(F_L*p)(x),\quad \wt{C}_R(x)=(F_R*p)(x).
\end{equation}
Then we have
\begin{equation}
    \label{eq:shift_relation}
    \wt{C}_L(x)=\wt{C}(x-\delta),\quad \wt{C}_R(x)=\wt{C}(x+\delta).
\end{equation}
The functions $\wt{C}_L(x)$ and $\wt{C}_R(x)$ can be used to bound $C(x)$. Because of \eqref{eq:H_shifted_bound}, the fact that $p(x)$ is supported in $(-\pi/3,\pi/3)$ in $[-\pi,\pi]$, $\delta<\pi/6$, and that $H(y)$ and $F_L(y)$ both take value in $[0,1]$, for $x\in(-\pi/3,\pi/3)$ we have 
\[
\begin{aligned}
|\wt{C}_L(x)-C(x)| &\leq \int_{-\pi}^{\pi}p(x-y)|H(y)-F_L(y)|\dd y \\
&\leq \epsilon+\int_{0}^{2\delta}p(x-y)\dd y \\
&=\epsilon+C(x)-C(x-2\delta).
\end{aligned}
\]
Therefore
\begin{equation*}
    \label{eq:C_bound_left}
    \wt{C}_L(x) \geq C(x)-[\epsilon+C(x)-C(x-2\delta)] = C(x-2\delta)-\epsilon.
\end{equation*}
Similarly we have
\begin{equation*}
    \label{eq:C_bound_right}
    \wt{C}_R(x) \leq C(x)+[\epsilon+C(x+2\delta)-C(x)] = C(x+2\delta)+\epsilon.
\end{equation*}
Combining these two inequalities with \eqref{eq:shift_relation}, we have
\begin{equation*}
    C(x-2\delta)\leq \wt{C}(x-\delta)+\epsilon,\quad C(x+2\delta)\geq \wt{C}(x+\delta)-\epsilon.
\end{equation*}
This proves \eqref{eq:relation_CDF_ACDF}.

\section{Obtaining the ground state energy by solving the QEEP}
\label{sec:qeep}

Here we discuss how to obtain the ground state energy using algorithm in Ref.~\cite{somma2019quantum} to solve the QEEP. The cost of solving the QEEP as analyzed in Ref.~\cite{somma2019quantum} scales as $\epsilon^{-6}$. However, the cost can be much reduced for the problem of ground state energy estimation. For simplicity we assume $\|H\|<\pi/3$ and $\tau$ is chosen to be $1$.

In order to find the interval of size $2\epsilon$ containing the ground state energy , we first divide the interval $[-\pi/3,\pi/3]$ into $M$ bins of equal size smaller than $2\epsilon$. We then define the indicator function associated with an interval $[a,b]$ to be
\[
{1}_{[a,b]}(x) = \begin{cases}
1,&\ x\in[a,b],\\
0,&\ x\notin [a,b].
\end{cases}
\]
In QEEP the goal is to estimate $\Tr[\rho {1}_{[a,b]}(H)]$, where $[a,b]$ is one of the $M$ bins, to within precision $\Or(\epsilon)$. However, in our setting, if we know $p_0\geq \eta$, one can estimate $\Tr[\rho {1}_{[a,b]}(H)]$ to within error $\Or(\eta)$. If we get $\Tr[\rho {1}_{[a,b]}(H)]<\eta$ with high confidence then we know the ground state energy $\lambda_0$ is not in this interval. If know $\Tr[\rho {1}_{[a,b]}(H)]>\eta/2$ with high confidence then there is an eigenvalue in $[a,b]$. If the above task can be done, then we choose the leftmost bin in which $\Tr[\rho {1}_{[a,b]}(H)]>\eta/2$. This will enable us to solve the ground state energy estimation problem.

To estimate $\Tr[\rho {1}_{[a,b]}(H)]$, Ref.~\cite{somma2019quantum} first approximated the indicator function $1_{[a,b]}(x)$ using a truncated Fourier series \cite[Appendix A]{somma2019quantum}, similar to what we did in \cref{sec:approx_heaviside}. The number of terms $N_{\mathrm{term}}$ and the maximal evolution time $T$ both scale like $\epsilon^{-1}$. In Ref.~\cite{somma2019quantum} the author proposed estimating each Fourier mode $\Tr[\rho e^{-ijH}]$ to within error $\Or(\epsilon/N_{\mathrm{term}})$. Because here the estimation precision is $\Or(\eta)$ rather than $\Or(\epsilon)$, we should instead estimate $\Tr[\rho e^{-ijH}]$ to within error $\Or(\eta/N_{\mathrm{term}})=\Or(\eta\epsilon)$. Because we are using Monte Carlo sampling this requires $\Or(\eta^{-2}\epsilon^{-2})$ samples. We need the same number of samples for each $\Tr[\rho e^{-ijH}]$, and therefore the total time we need to run time evolution is $\Or(N_{\mathrm{term}} T \eta^{-2}\epsilon^{-2})=\Or(\eta^{-2}\epsilon^{-4})$. We omitted polylogarithmic factors in the complexity.

However if the analysis is done more carefully the dependence on $\epsilon$ could be improved. First one should notice that the error for each $\Tr[\rho e^{-ijH}]$ is independent, and the estimate is unbiased (if we do not consider the Fourier approximation error), as is the case in our algorithm (Section~\ref{sec:evaluate_ACDF}). Therefore the total error for estimating $\Tr[\rho {1}_{[a,b]}(H)]$ accumulates sublinearly. {More precisely, let the error for estimating $\Tr[\rho e^{-ijH}]$ be $\varepsilon_j$ with variance $\sigma_j^2$, and let the coefficient for $\Tr[\rho e^{-ijH}]$ be $A_j$. Then the total error $\sum_j A_j \varepsilon_j$ has variance $\sum_j A_j^2 \sigma_j^2$. Therefore the total error is roughly $\sqrt{\sum_j A_j^2 \sigma_j^2}$ instead of the linearly accumulated error $\sum_j A_j\sigma_j$. These two can have different asymptotic scaling depending on the magnitude of $A_j$.} {Because of this one} can in fact choose to estimate $\Tr[\rho e^{-ijH}]$ to within error $\Or(\eta/\sqrt{N_{\mathrm{term}}})=\Or(\eta\epsilon^{-1/2})$. This saves a $\epsilon^{-1}$ factor in the total runtime. Furthermore, one can choose to evaluate the approximate indicator function in a stochastic way, like we did in Section~\ref{sec:evaluate_ACDF}. By taking into account the decay of Fourier coefficients, similar to Lemma~\ref{lem:approx_heaviside_function} (iii), it is possible to further reduce the complexity.

\section{Complexity analysis for using Trotter formulas}
\label{sec:trotter}

In this appendix, instead of using the maximal evolution time and the total evolution time to quantify the complexity, we directly analyze the circuit depth and the total runtime when the time evolution is simulated using Trotter formulas.
We suppose the Hamiltonian $H$ can be decomposed as $H=\sum_{\gamma} H_\gamma$, where each of $H_{\gamma}$ can be efficiently exponentiated. A $p$-th order Trotter formula applied to $e^{-i\tau H}$ with $r$ Trotter steps gives us a unitary operator $U_{\mathrm{HS}}$ with error
\[
\|U_{\mathrm{HS}}-e^{-i\tau H}\|\leq C_{\mathrm{Trotter}}\tau^{p+1} r^{-p},
\]
where $C_{\mathrm{Trotter}}$ is a prefactor, for which the simplest bound is $C_{\mathrm{Trotter}}=\Or((\sum_{\gamma}\|H\|_{\gamma})^{p+1})$. Tighter bounds in the form of a sum of commutators are proved in Refs.~\cite{ChildsSuTranWiebeZhu2021,SuHuangCampbell2020nearly}.

\subsection{The algorithm in this work}
\label{sec:trotter_this_work}

Our algorithm requires approximating Eq.~\eqref{eq:evaluate_ACDF} to precision $\eta$ (as in Theorem~\ref{cor:ground_energy} $\eta$ is a lower bound of $p_0/2$) using Trotter formulas. Suppose we are using a $p$-th order Trotter formula, then we want
\[
\Big\|\sum_{j}\hat{F}_j e^{ijx} \Tr[\rho e^{-ij\tau H}] - \sum_{j}\hat{F}_j e^{ijx} \Tr[\rho U_{\mathrm{HS}}^j]\Big\|=\Or(\eta).
\]
Since the left-hand side can be upper bounded by
\[
\sum_{j}|\hat{F}_j||j|  \|e^{-i\tau H} - U_{\mathrm{HS}}\| = \Or(d\|e^{-i\tau H} - U_{\mathrm{HS}}\|)
\]
by Lemma~\ref{lem:approx_heaviside_function} (iii), we only need to choose $r$ so that
\[
C_{\mathrm{Trotter}}\tau^{p+1} r^{-p} = \Or(\eta d^{-1}).
\]
Therefore we can choose
\[
r = \max\{1, \wt{\Or}(d^{1/p}\eta^{-1/p}C_{\mathrm{Trotter}}^{1/p}\tau^{1+1/p})\}
\]
The maximal evolution time in Corollary~\ref{cor:ground_energy} tells us how many times we need to use the operator $U_{\mathrm{HS}}$ (multiplied by a factor $\tau$). Multiply this by $r$ we have the maximal circuit depth we need, which is
\begin{equation}
\label{eq:circuit_depth_our_work}
dr=\wt{\Or}(\max\{\tau^{-1}\epsilon^{-1},\epsilon^{-1-1/p}\eta^{-1/p}C^{1/p}_{\mathrm{Trotter}}\}).
\end{equation}
Similarly we have the total runtime
\begin{equation}
\label{eq:total_runtime_our_work}
\wt{\Or}(\max\{\tau^{-1}\epsilon^{-1}\eta^{-2},\epsilon^{-1-1/p}\eta^{-2-1/p}C^{1/p}_{\mathrm{Trotter}}\}).
\end{equation}
If we fix $H$ and let $\epsilon,\eta\to 0$, then we can see this gives us an extra $\epsilon^{-1/p}\eta^{-1/p}$ factor in the circuit depth and total runtime, compared to the maximal evolution time and the total evolution time respectively.

\subsection{Quantum phase estimation}
\label{sec:trotter_qpe}

We then analyze the circuit depth and total runtime requirement for estimating the ground state energy with QPE, where the time evolution is performed using Trotter formulas. We analyze the multi-ancilla qubit version of QPE and the result is equally valid for the single-ancilla qubit version using semi-classical Fourier transform.

In QPE, when we replace all exact time evolution with $U_{\mathrm{HS}}$,
we would like to ensure that the probability of obtaining an energy measurement close to the ground state energy remains bounded away from $0$ by $\Omega(\eta)$.
Therefore the probability distribution of the final measurement outcome should be at most $\Or(\eta)$ away from the original distribution in terms of the total variation distance. 

Because the only part of QPE that depends on the time evolution operator is the multiply-controlled unitary
\[
\sum_{j=0}^{J-1} \ket{j}\bra{j}\otimes e^{-ij\tau H},
\]
which is replaced by
\[
\sum_{j=0}^{J-1} \ket{j}\bra{j}\otimes U_{\mathrm{HS}}^j
\]
when we use Trotter formulas, we only need to ensure the difference between the two operators to be upper bounded by $\Or(\eta)$ in terms of operator norm. Therefore we need
\[
J\|e^{-ij\tau H}-U_{\mathrm{HS}}\|=\Or(\eta).
\]
As discussed in Section~\ref{sec:related}, we need to choose $J=\Or(\tau^{-1}\epsilon^{-1}\eta^{-1})$ (we need the $\tau^{-1}$ factor to account for rescaling $H$, and $p_0$ in Section~\ref{sec:related} is replaced by $\eta$). Following the same analysis as in the previous section, we need to choose the number of Trotter steps for approximating $e^{-i\tau H}$ to be
\[
r=\max\{1,\Or(J^{1/p}\eta^{-1/p}C_{\mathrm{Trotter}}^{1/p}\tau^{1+1/p})\}
\]
Therefore the circuit depth needed is
\begin{equation}
\label{eq:circuit_depth_qpe}
Jr = \Or(\max\{\tau^{-1}\epsilon^{-1}\eta^{-1},\epsilon^{-1-1/p}\eta^{-1-2/p}C_{\mathrm{Trotter}}^{1/p}\}),
\end{equation}
and the total runtime is
\begin{equation}
\label{eq:total_runtime_qpe}
\Or(\max\{\tau^{-1}\epsilon^{-1}\eta^{-2},\epsilon^{-1-1/p}\eta^{-2-2/p}C_{\mathrm{Trotter}}^{1/p}\}).
\end{equation}
Again, if we fix $H$ and let $\epsilon,\eta\to 0$, then we can see this gives us an extra $\epsilon^{-1/p}\eta^{-2/p}$ factor in the circuit depth and total runtime, compared to the maximal evolution time and the total evolution time respectively. This is worse by a factor of $\eta^{-1/p}$ than the cost using our algorithm.

\section{The control-free setting}
\label{sec:control_free}

In this appendix we introduce, as an alternative to the quantum circuit in \eqref{eq:Hadamard_circuit}, a circuit which does not require controlled time evolution. This construction is mainly based on the ideas in Refs.~\cite{RussoEtAl2020evaluating,LuBanulsCirac2020algorithms,OBrienEtAl2020error}. We will introduce the construction of the circuit and discuss how to use the measurement results from the circuit to construct a random variable $\wt{Z}$ satisfying 
\begin{equation}
\label{eq:wtZ_requirement}
\mathbb{E}[\wt{Z}] = \Tr[\rho e^{-itH}]
\end{equation}
for any given $t$. Then choosing $t=j\tau$, we will be able to replace $X_j$ and $Y_j$ with $\Re \wt{Z}$ and $\Im \wt{Z}$ respectively, while satisfying \eqref{eq:measurement_outcome_expect_X} and \eqref{eq:measurement_outcome_expect_Y}.
In order to remove the need of performing controlled time evolution of $H$, we need some additional assumptions.
\begin{enumerate}
    \item The initial state $\rho$ is a pure state $\ket{\phi_0}$, prepared using a unitary circuit $U_I$.
    \item We have a reference eigenstate $\ket{\psi_{R}}$ of $H$ corresponding to a known eigenvalue $\lambda_{R}$. This eigenstate can be efficiently prepared using a unitary circuit $U_R$.
    \item $\braket{\psi_{R}|\phi_0}=0$. 
\end{enumerate}
The last assumption $\braket{\psi_{R}|\phi_0}=0$ implies $\braket{\psi_{R}|e^{-itH}|\phi_0}=0$ for all $t\in\RR$ because $\ket{\psi_R}$ is an eigenvector of $e^{-itH}$.
All of these are reasonable assumptions for a second-quantized fermionic Hamiltonian: we choose $\ket{\psi_{R}}$ to be the vacuum state, $\lambda_{R}=0$, and $\ket{\phi_0}$ to be the Hartree-Fock state, which can be efficiently prepared \cite{KivlichanMcCleanEtAl2018}. Naturally $\braket{\psi_{R}|\phi_0}=0$ because of the particle number conservation.

With these assumptions, we let 
\[
\alpha=\braket{\phi_0|e^{-it(H-\lambda_{R})}|\phi_0}.
\]
Also define
\[
\ket{\Psi_{0,\pm}}=\frac{1}{\sqrt{2}}(\ket{\psi_R}\pm\ket{\phi_0}),\quad
\ket{\Psi_{1,\pm}}=\frac{1}{\sqrt{2}}(\ket{\psi_R}\pm i\ket{\phi_0}).
\]
With these states, we can express $\alpha$ in terms of expectation values:
\[
\begin{aligned}
\braket{\Psi_{0,+}|e^{-itH}|\Psi_{0,\pm}} &= \frac{1}{2}e^{-i\lambda_{R}t}(1\pm\alpha), \\
\braket{\Psi_{0,+}|e^{-itH}|\Psi_{1,\pm}} &= \frac{1}{2}e^{-i\lambda_{R}t}(1\pm i\alpha).
\end{aligned}
\]
In Refs.~\cite{LuBanulsCirac2020algorithms,RussoEtAl2020evaluating} it is assumed that we have unitary circuits to prepare $\ket{\Psi_{0,\pm}}$ and  $\ket{\Psi_{1,\pm}}$. However it is not immediately clear how these circuits are constructed. Here we will take a slightly different approach. The circuit diagram is as follows:
\begin{equation}
\label{eq:circuit_alt_Hadamard}
\begin{quantikz}
\lstick{$\ket{0}$}   & \gate{\mathrm{H}} & \gate{K} & \ctrl{2}   & \octrl{2}  &  
                 \qw & \qw                   & \qw                 & \gate{\mathrm{H}} & \meter{} \\
\lstick{$\ket{0}$}   & \gate{\mathrm{H}} & \qw      & \qw        & \qw        &   
                 \qw & \ctrl{1}              & \octrl{1}           & \gate{\mathrm{H}} & \meter{} \\
\lstick{$\ket{0^n}$} & \qwbundle[]{}     & \qw      & \gate{U_I} & \gate{U_R} &
    \gate{e^{-itH}}  & \gate{U_I^{\dagger}}  & \gate{U_R^{\dagger}}& \qw               & \meter{}\\
\end{quantikz}
\end{equation}
In this circuit we choose $K=I$ for the real part of $\alpha$ or the phase gate $S$ for the imaginary part of $\alpha$. This circuit uses three registers, with the first two containing one qubit each, and the third one containing $n$ qubits. 

We first analyze the probability of different measurement outcomes when $K=I$. When we run the above circuit, and measure all the qubits, the probability of the measurement outcomes of the first two qubits being $(b_1,b_2)$, and the rest of the qubits being all $0$, is
\[
\begin{aligned}
p_{0,(b_1,b_2)} &= 
\begin{cases}
|\braket{\Psi_{0,+}|e^{-itH}|\Psi_{0,+}}|^2/4,&\ b_1=b_2 \\
|\braket{\Psi_{0,+}|e^{-itH}|\Psi_{0,-}}|^2/4,&\ b_1\neq b_2 \\
\end{cases} \\
&= \frac{1}{16}(1+|\alpha|^2+ 2(-1)^{b_1+b_2}\Re \alpha).
\end{aligned}
\]
Here we have used the fact that $|\braket{\Psi_{0,+}|e^{-itH}|\Psi_{0,+}}|=|\braket{\Psi_{0,-}|e^{-itH}|\Psi_{0,-}}|$. 

Similarly, when $K=S$, the probability of the measurement outcomes of the first two qubits being $(b_1,b_2)$, and the rest of the qubits being all $0$, is
\[
\begin{aligned}
p_{1,(b_1,b_2)} &= 
\begin{cases}
|\braket{\Psi_{0,+}|e^{-itH}|\Psi_{1,+}}|^2/4,&\ b_1=b_2 \\
|\braket{\Psi_{0,+}|e^{-itH}|\Psi_{1,-}}|^2/4,&\ b_1\neq b_2 \\
\end{cases} \\
&= \frac{1}{16}(1+|\alpha|^2- 2(-1)^{b_1+b_2}\Im \alpha).
\end{aligned}
\]

Based on the above analysis, we construct the random variable $\wt{Z}$ in the following way: we first run the circuit with $K=I$, and denote the measurement outcomes of the first two qubits by $(b_1,b_2)$. If the third register returns all $0$ when measured, then we let $\wt{X}=(-1)^{b_1+b_2}$. Otherwise we let $\wt{X}=0$. Similarly we define a random variable $\wt{Y}$ for $K=S$. We have
\[
\mathbb{E}[\wt{X}] = p_{0,(0,0)} + p_{0,(1,1)} - p_{0,(0,1)} - p_{0,(1,0)} = \frac{1}{2}\Re\alpha,
\]
and
\[
\mathbb{E}[\wt{Y}] = p_{1,(0,0)} + p_{1,(1,1)} - p_{1,(0,1)} - p_{1,(1,0)} = -\frac{1}{2}\Im\alpha.
\]
Therefore we can define
\[
\wt{Z} = 2e^{-i\lambda_R t}(\wt{X}-i\wt{Y}).
\]
Then 
\[
\mathbb{E}[\wt{Z}] = e^{-i\lambda_R t}\alpha = \Tr[\rho e^{-itH}].
\]
Thus we can see this new random variable $\wt{Z}$ satisfies \eqref{eq:wtZ_requirement}. Compared to the $Z$ in the main text this new random variable has a slightly larger variance:
\[
\mathrm{var}[\wt{Z}] \leq \mathbb{E}[|\wt{Z}|^2]\leq 8.
\]
This however does not change the asymptotic complexity.

\section{Details on the numerical experiments}
\label{sec:detail_numerical}

In Figure~\ref{fig:ACDF},
we apply the procedure described in Section~\ref{sec:evaluate_ACDF} to approximate the CDF of the Fermi-Hubbard model, described by the Hamiltonian
\begin{equation}
\label{eq:hubbard_ham}
H = -t\sum_{\braket{j,j'},\sigma} c_{j,\sigma}^{\dagger}c_{j',\sigma} + U\sum_{j} \left(n_{j,\uparrow}-\frac{1}{2}\right)\left(n_{j,\downarrow}-\frac{1}{2}\right),
\end{equation}
where $c_{j,\sigma}$ ($c^{\dag}_{j,\sigma}$) denotes the fermionic annihilation (creation) operator on the site $j$ with spin $\sigma\in\{\uparrow,\downarrow\}$. $\braket{\cdot,\cdot}$ denotes sites that are adjacent to each other. $n_{j,\sigma}=c_{j,\sigma}^{\dagger}c_{j,\sigma}$ is the number operator. The sites are arranged into a one-dimensional chain, with open boundary condition. 

We first evaluate $\bar{G}(x)$ defined in \eqref{eq:G_average}, and the result is shown in Figure~\ref{fig:ACDF}. We use a classical computer to simulate the sampling from the quantum circuit. The initial state $\rho$ is chosen to be the Hartree-Fock solution, which has an overlap of around $0.4$ with the exact ground state. We can see that $\bar{G}(x)$ closely follows the CDF, and even though there is significant noise from Monte Carlo sampling, the jump corresponding to the ground state energy is clearly resolved.

Then we consider estimating the ground state energy from $\bar{G}(x)$. In this numerical experiment we use a heuristic approach, and the rigorous approach that comes with provable error bound and confidence level is discussed in Sections~\ref{sec:find_ground_energy} and \ref{sec:invert_CDF}. We obtain the estimate by
\[
x^{\star} = \inf\{x:\bar{G}(x)\geq {\eta/2}\},
\]
and $x^{\star}/\tau$ is an estimate for the ground state energy $\lambda_0$. We expect $x^{\star}\in [\tau\lambda_0-\delta,\tau\lambda_0+\delta]$. {Here $\eta$ is chosen so that $p_0\geq \eta$.}

The error of the estimated ground state energy,  the total evolution time, and the maximal evolution time are shown in Figure~\ref{fig:error_and_time}, in which we have chosen $U/t=4$ for the Hubbard model. In the right panel of Figure~\ref{fig:error_and_time} we can see the line for total evolution time runs parallel to the line for the maximal evolution time. Because the maximal evolution time scales linearly with respect to $\delta^{-1}$, and this plot uses logarithmic scales for both axes, we can see the total evolution time has a $\delta^{-1}$ scaling, and is therefore inversely proportional to the allowed error of ground state energy estimation.

\section{Frequently used symbols}
\label{sec:frequently_used_symbols}

\begin{table}[h!]
    \centering
    \makegapedcells
    {
    \begin{tabular}{c|c}
     \hline
      \hline
        Symbol & Meaning \\
        \hline
        $H$ & The Hamiltonian for which we want to estimate the ground state energy. \\
         \hline
        $\rho$ & The initial state from which we perform time evolution and measurement. \\
         \hline
        $p_k$ & The overlap between $\rho$ and the $k$-th lowest eigensubspace. \\
         \hline
        $\tau$ & A renormalization factor satisfying $\tau\|H\|\leq \pi/4$. \\
         \hline
        $p(x)$ & The spectral density associated with $\tau H$ and $\rho$. \\
         \hline
        $C(x)$ & The cumulative distribution function defined in \eqref{eq:CDF_def}. \\
         \hline
        $\wt{C}(x)$ & The approximate CDF defined in \eqref{eq:ACDF_definition}. \\
         \hline
        $G(x)$ & An unbiased estimate of the ACDF $\wt{C}(x)$ defined in \eqref{eq:unbiased_estimate_for_ACDF}. \\
         \hline
        $\bar{G}(x)$ &The average of multiple samples of $G(x)$, defined in \eqref{eq:G_average}. \\
         \hline
        $J_k$ &  \makecell{An integer drawn from the distribution \eqref{eq:J_dist} \\ signifying the number of steps in the time evolution. $|J_k|\leq d$.} \\
         \hline
        $Z_k$ & \makecell{A sample generated on a quantum circuit from two measurement outcomes. \\ Defined in \eqref{eq:def_Z}. Can only take value $\pm 1\pm i$.} \\
        \hline
        $d$  & The maximal possible value of $|J_k|$. \\
        \hline
        $\delta$ & \makecell{In the context of Corollary~\ref{cor:ground_energy} we choose $\delta=\tau\epsilon$ \\ where $\epsilon$ is the allowed error of the ground state energy.} \\
        \hline
        $\vartheta$ & The allowed failure probability. \\
         \hline
          \hline
    \end{tabular}
    }
    \caption{Frequently used symbols in this work.}
    \label{tab:frequently_used_symbols}
\end{table}

\end{document}